\newcommand{\upperRomannumeral}[1]{\uppercase\expandafter{\romannumeral#1}}
\theoremstyle{plain}
  \newtheorem{claim}[]{Claim}
  \newtheorem{proposition}[]{Proposition}
  \newtheorem{lemma}[]{Lemma}
  \newtheorem{theorem}[]{Theorem}
  \newtheorem{corollary}[]{Corollary}
  \newtheorem{remark}[]{Remark}
\title[Lee-Yang zeros]{Motion of Lee-Yang zeros}
\author{Qi Hou}
\address{Yanqi Lake Beijing Institute of Mathematical Sciences and Applications, No.11 Yanqi Lake West Road, Beijing 101408, China.}
\email{houqi@bimsa.cn}
\author{Jianping Jiang}
\address{Yau Mathematical Sciences Center, Tsinghua University, Beijing 100084, China.}
\email{jianpingjiang@tsinghua.edu.cn}
\author{Charles M. Newman}
\address{Courant Institute of Mathematical Sciences, New York University,
	251 Mercer st, New York, NY 10012, USA, \& NYU-ECNU Institute of Mathematical
	Sciences at NYU Shanghai, 3663 Zhongshan Road North, Shanghai 200062, China.}
\email{newman@cims.nyu.edu}
\begin{document}
\begin{abstract}
We consider the zeros of the partition function of the Ising model 
with ferromagnetic pair interactions and complex external field. Under the assumption 
that the graph with strictly positive interactions is connected, we vary  the interaction 
(denoted by $t$) at a fixed edge. It is already known that each zero is monotonic 
(either increasing or decreasing) in $t$; we prove that its motion is local: the entire 
trajectories of any two distinct zeros are disjoint. If the underlying graph is a complete graph and all interactions 
take the same value $t\geq 0$ (i.e., the Curie-Weiss model), we prove that all the 
principal zeros (those in $i[0,\pi/2)$) decrease strictly in $t$.
\end{abstract}

\maketitle

\section{Introduction and main results}
\subsection{Overview}
The Ising model is one of the most studied models in statistical physics. In 1952, 
Yang and Lee \cite{YL52,LY52} studied the partition function of the Ising model. One surprising result coming out of their study was that all zeros of the partition function lie on the unit circle in the complex fugacity plane (the imaginary axis in the complex external field plane). This result has been extended to other systems \cite{Rue71,SG73,New74,LS81,BBCKK04,BBCK04}. The Lee-Yang circle theorem has been widely applied to study properties of phase transitions in statistical physics. For example, it was applied to prove the mass gap \cite{PL74,GRS75}, correlation inequalities \cite{New75}, inequalities for critical exponents \cite{Sok81}, and scaling limits of total magnetization \cite{CJN21}; see also \cite{PZWCDL15} for an interesting experimental study. 
We refer to \cite{BDL05} and the many references therein for a review of the Lee-Yang theory. 

The original goal of the Lee-Yang program was to understand phase transitions of 
models in statistical physics by studying directly the zeros of their partition 
functions. That is, the complex singularities of the free energy (which 
for a finite system are exactly the zeros of the partition function) may approach 
the real axis in the thermodynamic limit. It is believed that in the infinite volume 
limit, the distribution of Lee-Yang zeros has a density $g_T(\theta)$ where $T$ 
is the temperature and $\theta\in[-\pi,\pi]$ corresponds to a point on the unit circle. 
The existence of $g_T$ was rigorously proved only in \cite{BBCKK04} for very low temperature 
$T\ll T_c$ where $T_c$ is the critical temperature. It was proved in \cite{CJN21} 
that for each $T>T_c$, the limit distribution of Lee-Yang zeros has no support 
in a neighborhood of $\theta=0$, and thus $g_T(\theta)=0$ for all small 
$|\theta|$. Suppose that the support of $g_T$ is $\{\theta:\theta_0(T)\leq |\theta|\leq \pi\}$ 
where $\theta_0(T)\in(0,\pi)$ for $T>T_c$. Then the Yang-Lee edge singularity  
\cite{KG71,Fis78,Car85} describes the expected power-law behavior of $g_T$ 
near the critical value $\theta_0(T)$, i.e., $g_T(\theta)\sim|\theta-\theta_0(T)|^{\sigma}$ 
for some critical exponent $\sigma$ (which depends on the spatial dimension); at $T=T_c$, 
$\theta_0(T_c)$ is expected to be $0$ and $\sigma$ is expected to equal the critical 
exponent $\delta$ which governs the behavior of magnetization as a function of the 
external field; in the physics literature, there is also a predicted scaling behavior 
for $g_T(\theta)$ when $T$ is close to $T_c$ which is related to the critical 
exponents $\beta$ and $\gamma$ , see, e.g., (53) of \cite{BDL05}. If all those 
critical exponents related to $g_T$ can be determined (eventually rigorously), then 
one would have a full understanding of the Ising model using scaling relations to obtain other critical exponents.

Despite the elegant picture described in the previous paragraph, rigorous results 
about the distribution of the Lee-Yang zeros are very rare in both the discrete 
(i.e., on finite graphs) and continuum (i.e., in the thermodynamic limit) settings. 
Two very nice results about the limiting distribution of Lee-Yang zeros are \cite{CHJR19} 
for the Cayley tree and \cite{Kab22} for the complete graph. But there is almost no 
rigorous result concerning the Lee-Yang zeros on $\mathbb{Z}^d$ if the temperature 
is near the critical value, which is expected to be the most interesting case. So it 
seems fair to say that the original Lee-Yang program is far from complete. 
In this paper, we consider the Lee-Yang zeros of the Ising model with ferromagnetic 
pair interactions defined on a finite graph. Under the assumption that the graph 
with strictly positive interactions is connected, we study the motion of those 
zeros as the interactions vary. If one varies only a single interaction, $t$, at a 
fixed edge, it has been proved in \cite{NG83} that each zero moves monotonically, 
i.e., each zero can behave in only one of the three ways: constant, strictly increasing, or 
strictly decreasing; we prove (see Theorem \ref{thm:t}) that for any two distinct zeros 
at $t=0$ (say $x_k$ and $x_j$ with $x_k(0)\neq x_j(0)$), their entire trajectories 
are disjoint: $x_k([0,\infty))\cap x_j([0,\infty))=\emptyset$. We remark that it has 
been proved in~\cite{CJN22} that the first zero is always decreasing in $t\geq0$ 
regardless of the connectedness of the underlying graph; as pointed out in \cite{NG83}, 
it is possible that the second zero is increasing in $t\geq 0$ (see Remark \ref{rem:inc}). 

In \cite{NG83}, Nishimori and Griffiths conjectured that all principal zeros decrease 
in $t>0$ if the pair interaction on each edge is $t$ and the graph is connected. 
Our second main result (see Theorem \ref{thm:K_n}) supports this conjecture by considering the Ising 
model on the complete graph of $n$ vertices (i.e., the Curie-Weiss model) and proving 
that all the principal zeros (those that lie in $i[0,\pi/2)$) decrease strictly in 
$t\geq 0$. We should mention that the limit distribution of the Lee-Yang zeros 
for the Curie-Weiss model has been identified recently in \cite{Kab22}. We hope 
that the results in the current paper may lead to further development
of the Lee-Yang program.

\subsection{Main results}
Let $G=(V,E)$ be a finite graph with $V$ the set of vertices and $E$ the set of edges. 
The Ising model on $G$ with \textit{ferromagnetic pair interactions} (or \textit{couplings}) 
$\mathbf{J}:=(J_{uv})_{uv\in E}$ where $J_{uv}\in [0,\infty)$ for each $uv\in E$ 
and external field $h\in {\mathbb R}$ is defined by the probability measure 
$\mathbb{P}_{G,\mathbf{J},h}$ on $\{-1,+1\}^{V}$ such that
\begin{equation}\label{eq:Isingdef}
	\mathbb{P}_{G,\mathbf{J},h}(\sigma)=\frac{\exp\left[\sum_{uv\in E}J_{uv}\sigma_u\sigma_v+h\sum_{u\in V}\sigma_u\right]}{Z_{G,\mathbf{J},h}}, \sigma\in\{-1,+1\}^{V},
\end{equation}
where $Z_{G,\mathbf{J},h}$ is the partition function that makes 
$\mathbb{P}_{G,\mathbf{J},h}(\sigma)$
a probability measure. We are interested in the zeros of 
$Z_{G,\mathbf{J},h}$ as a function of $h\in\mathbb{C}$. Note that
\begin{align}
	Z_{G,\mathbf{J},h}:=&\sum_{\sigma\in\{-1,+1\}^V}\exp\left[\sum_{uv\in E}J_{uv}\sigma_u\sigma_v+h\sum_{u\in V}\sigma_u\right]\\
	=&\exp\left[h|V|\right]\sum_{\sigma\in\{-1,+1\}^V}\exp\left[\sum_{uv\in E}J_{uv}\sigma_u\sigma_v+h\sum_{u\in V}(\sigma_u-1)\right].
\end{align} 
Let $z=e^{-2h}$. Then it is clear that $Z_{G,\mathbf{J},h}$ divided by $\exp[h|V|]$ 
is a polynomial in $z$ with degree $|V|$. So by the fundamental theorem of algebra, 
$Z_{G,\mathbf{J},h}$ has exactly $|V|$ complex roots (in the variable $z$). The 
Lee-Yang circle theorem \cite{LY52} says that these $|V|$ roots are all on the unit 
circle. So we may assume that these roots are
\begin{equation}\label{eq:thetadef}
	\exp(i\theta_1), \exp(i\theta_2), \dots, \exp(i\theta_{|V|}) \text{ with }0<\theta_1\leq\theta_2\leq\dots\leq\theta_{|V|}<2\pi,
\end{equation}
where for the strict inequalities, we have used the fact that $Z_{G,\mathbf{J},0}>0$. By spin-flip symmetry, 
those $|V|$ roots are symmetric with respect to the real axis. 

Therefore, 
if $n:=|V|$ is even, all zeros of $Z_{G,\mathbf{J},h}$ as a function of $h$ for fixed $\mathbf{J}$ are
\begin{align}\label{eq:evenzeros}
	&i\left(\pm\frac{\theta_1(\mathbf{J})}{2}+m\pi\right), i\left(\pm\frac{\theta_2(\mathbf{J})}{2}+m\pi\right),\dots,i\left(\pm\frac{\theta_{n/2}(\mathbf{J})}{2}+m\pi\right)\nonumber\\
	&\text{ with }0<\theta_1(\mathbf{J})\leq\theta_2(\mathbf{J})\leq\dots\leq\theta_{n/2}(\mathbf{J})\leq\pi, m\in\mathbb{Z};
\end{align}
if $n$ is odd, all zeros of $Z_{G,\mathbf{J},h}$ as a function of $h$ for fixed $\mathbf{J}$ are
\begin{align}\label{eq:oddzeros}
	&i\left(\pm\frac{\theta_1(\mathbf{J})}{2}+m\pi\right), \dots,i\left(\pm\frac{\theta_{(n-1)/2}(\mathbf{J})}{2}+m\pi\right),i\left(\frac{\theta_{(n+1)/2}(\mathbf{J})}{2}+m\pi\right)\nonumber\\
	&\text{ with }0<\theta_1(\mathbf{J})\leq\theta_2(\mathbf{J})\leq\dots\leq\theta_{(n-1)/2}(\mathbf{J})\leq\pi=\theta_{(n+1)/2}(\mathbf{J}), m\in\mathbb{Z}.
\end{align}
We are interested in the motion of zeros when one increases the interaction at a 
fixed edge $u_0v_0\in E$. More precisely,  we define
\begin{equation}\label{eq:Z_tdef}
	Z_{G,t}(x):=\sum_{\sigma\in\{-1,+1\}^{V}}\exp\left[t\sigma_{u_0}\sigma_{v_0}+\sum_{uv\in E}J_{uv}\sigma_u\sigma_v+ix\sum_{u\in V}\sigma_u\right], t\geq 0,
\end{equation}
where we have dropped the dependence 
on $\mathbf{J}$ from the notation for the
partition function. This should 
cause no confusion since we fix $\mathbf{J}$ in Theorem \ref{thm:t} below. 
We want to study the motion of zeros of $Z_{G,t}(x)$ as a function of $t$. Note that 
we have written the external field $h$ as $ix$ so that all zeros of $Z_{G,t}(x)$ 
are real and they are
\begin{align}
	&\pm x_1(t)+m\pi, \pm x_2(t)+m\pi, \dots, \pm x_{n/2}(t)+m\pi \text{ with } m\in\mathbb{Z} \text{ and }\nonumber\\
	&\qquad 0<x_1(t)\leq x_2(t)\leq\dots\leq x_{n/2}(t)\leq\pi/2  \text{ if }n \text{ is even}, \label{eq:evenzeros1}\\
	&\pm x_1(t)+m\pi, \pm x_2(t)+m\pi, \dots, \pm x_{(n-1)/2}(t)+m\pi, \pi/2+m\pi \text{ with }m\in\mathbb{Z}\text{ and }\nonumber\\
	& \qquad0<x_1(t)\leq x_2(t)\leq\dots\leq x_{(n-1)/2}(t)\leq\pi/2  \text{ if } n\text{ is odd}\label{eq:oddzeros1} .
\end{align}
In terms of the definitions in \eqref{eq:evenzeros} and \eqref{eq:oddzeros},
\begin{equation}\label{eq:xandtheta}
	x_k(t)=\theta_k(\tilde{\mathbf{J}})/2, 
	\forall k=1,2,\dots,n/2\text{ if }n\text{ is even }, 
	\forall k=1,2,\dots,(n+1)/2\text{ if }n\text{ is odd },
\end{equation}
where $\tilde{\mathbf{J}}:=(J_{uv})_{uv\in E}$,  $\tilde{J}_{uv}:=J_{uv}$ if $uv\neq u_0v_0$ and $\tilde{J}_{u_0v_0}:=J_{u_0v_0}+t$. For a fixed $\mathbf{J}$, its associated subgraph $G_{>0}:=(V,E_{>0})$ of $G=(V,E)$ is defined by
\begin{equation}\label{eq:E+}
	E_{>0}:=\{uv: J_{uv}>0, uv\in E \}.
\end{equation}
Our first main result is
\begin{theorem}\label{thm:t}
Let $G=(V,E)$ be a finite graph and $\mathbf{J}$ be ferromagnetic pair interactions on $G$. Suppose that  $G_{>0}=(V,E_{>0})$ defined by \eqref{eq:E+} is connected. If $n=|V|$ is even, then the first $n/2$ positive zeros of $Z_{G,t}(x)$ satisfy
\begin{equation}\label{eq:xsimple}
	0<x_1(t)<x_2(t)<\dots<x_{n/2}(t)<\frac{\pi}{2}, t\in[0,\infty).
\end{equation}
Moreover, for each $k\in\{1,2,\dots,n/2\}$, exactly one of the following three cases occurs:
\begin{enumerate}[a)]
	\item $x_k(t)\equiv x_k(0)$ for each $t\in[0,\infty)$,
	\item $x_k(t)$ is strictly decreasing in $t\in[0,\infty)$ and $x_k(t)\in\left(x_{k-1}(0),x_k(0)\right)$ for each $t\in(0,\infty)$,
	\item $x_k(t)$ is strictly increasing in $t\in[0,\infty)$ and $x_k(t)\in\left(x_{k}(0),x_{k+1}(0)\right)$ for each $t\in(0,\infty)$,
\end{enumerate}
where we set $x_0(0):=0$ and $x_{n/2+1}(0):=\pi/2$. 

If $n=|V|$ is odd, then the first $(n+1)/2$ positive zeros of $Z_{G,t}(x)$ satisfy
\begin{equation}\label{eq:xsimple1}
	0<x_1(t)<x_2(t)<\dots<x_{(n-1)/2}(t)<\frac{\pi}{2}=x_{(n+1)/2}(t), t\in[0,\infty).
\end{equation}
Moreover, for each $k\in\{1,2,\dots,(n-1)/2\}$, exactly one of the following three cases occurs:
\begin{enumerate}[a)]
	\item $x_k(t)\equiv x_k(0)$ for each $t\in[0,\infty)$,
	\item $x_k(t)$ is strictly decreasing in $t\in[0,\infty)$ and $x_k(t)\in\left(x_{k-1}(0),x_k(0)\right)$ for each $t\in(0,\infty)$,
	\item $x_k(t)$ is strictly increasing in $t\in[0,\infty)$ and $x_k(t)\in\left(x_{k}(0),x_{k+1}(0)\right)$ for each $t\in(0,\infty)$,
\end{enumerate}
where we set $x_0(0):=0$.
\end{theorem}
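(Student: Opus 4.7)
The plan is to reduce everything to a uniform-in-$t$ simplicity statement for the $x$-zeros of $Z_{G,t}(x)$, and then to extract both the trichotomy (a)(b)(c) and the interval containment from simplicity combined with the Nishimori-Griffiths monotonicity. The central tool is the splitting
\[
Z_{G,t}(x) \;=\; e^{\,t+J_{u_0v_0}}A(x)\;+\;e^{-(t+J_{u_0v_0})}B(x),
\]
where $A(x)$ collects configurations with $\sigma_{u_0}=\sigma_{v_0}$ and $B(x)$ those with $\sigma_{u_0}=-\sigma_{v_0}$. Two structural facts will be used repeatedly: first, $\partial_{tt}Z_{G,t}\equiv Z_{G,t}$, so $\partial_{tt}Z$ vanishes at every zero; second, for each fixed real $y$, the scalar equation $Z_{G,\tau}(y)=0$ in $\tau$ reduces to $e^{2(\tau+J_{u_0v_0})}=-B(y)/A(y)$ and so admits at most one real solution in $\tau$, unless $A(y)=B(y)=0$, in which case every $\tau$ is a solution.

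The heart of the proof is the simplicity claim: $Z_{G,t}(x)$ has only simple $x$-zeros for every $t\in[0,\infty)$. Suppose toward contradiction that $(t_\ast,x^\ast)$ is a double $x$-zero. Using $\partial_{tt}Z=Z$, the Taylor expansion near $(t_\ast,x^\ast)$ reads
\[
Z_{G,t}(x) \;=\; \gamma\,s\;+\;\beta\,s u\;+\;\tfrac12\alpha\,u^2\;+\;O(|s|^2+|u|^3),\qquad s=t-t_\ast,\;u=x-x^\ast,
\]
with $\gamma=\partial_t Z$, $\beta=\partial_{tx}Z$, $\alpha=\partial_{xx}Z$ evaluated at $(t_\ast,x^\ast)$. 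If $\gamma\neq 0$, the local zero set is the parabola $s\approx-\alpha u^2/(2\gamma)$, which lies entirely on a single side of $t=t_\ast$; hence $Z_{G,t}$ acquires or loses two real $x$-zeros as $t$ crosses $t_\ast$, contradicting the Lee-Yang theorem (which guarantees $n$ real zeros counted with multiplicity for every $t\geq 0$). Therefore $\gamma=0$, which together with $Z(t_\ast,x^\ast)=0$ forces $A(x^\ast)=B(x^\ast)=0$; the value $x^\ast$ is then a zero of $Z_{G,t}$ for \emph{every} $t\in\mathbb{R}$, corresponding to a case-(a) constant trajectory. The local zero set now factors as the constant line $\{u=0\}$ together with a transversal branch of slope $-2\beta/\alpha$, which would be a moving trajectory crossing through $x^\ast$. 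To exclude this last scenario I use the connectedness of $G_{>0}$: the polynomials $A$ and $B$ are themselves Lee-Yang polynomials for Ising-type models on the graph obtained by contracting $u_0v_0$, and Lee-Yang applied to $e^{\tau}A+e^{-\tau}B$ for every $\tau\in\mathbb{R}$ forces the real-axis zeros of $A$ and $B$ to interlace. A sign computation of $\beta$ and $\alpha$ at a common zero, using this interlacing, shows that the slope $-2\beta/\alpha$ is incompatible with any $t_\ast\geq 0$, completing the contradiction.

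With uniform simplicity established, the implicit function theorem and real analyticity of $Z_{G,t}(x)$ in $(t,x)$ produce real-analytic curves $x_k:[0,\infty)\to(0,\pi/2]$ whose strict ordering is preserved for all $t$; the Nishimori-Griffiths monotonicity then yields the trichotomy (a)(b)(c) at once, with real analyticity upgrading ``monotone'' to ``strictly monotone or identically constant''. For the interval containment, I appeal to the one-solution property above: any real $y$ not a common zero of $A,B$ is attained by at most one trajectory at exactly one time, so the images of moving trajectories are pairwise disjoint and also disjoint from the (singleton) images of case-(a) constant trajectories. Combined with strict ordering $x_{k-1}(t)<x_k(t)<x_{k+1}(t)$ for every $t$, this pins $x_k(t)\in(x_{k-1}(0),x_k(0))$ in case (b) and $x_k(t)\in(x_k(0),x_{k+1}(0))$ in case (c).

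I expect the main obstacle to be the interlacing-plus-sign argument at the end of the simplicity step. The connectedness of $G_{>0}$ is indispensable there: without it, both $A$ and $B$ factor along the components of $G_{>0}$, the resulting product structure produces accidental common zeros of $A$ and $B$ with incompatible signs of $\beta/\alpha$, and the theorem in fact fails. Everything else -- the parabola argument, the implicit function theorem step, and the deduction of the interval bounds from the one-solution property -- is routine once the simplicity claim is in hand.
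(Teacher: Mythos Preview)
Your one-solution observation is the genuinely new ingredient here, and it \emph{does} give a clean route to the interval containment that differs from the paper's. The paper instead derives the product identity
\[
\prod_{j}\frac{y_j(0)-y_k(t)}{y_j(0)}\;=\;-K(t,s)\prod_{j}\frac{y_j(s)-y_k(t)}{y_j(s)},\qquad K(t,s)>0,
\]
(Lemma~\ref{lem:ydifft}) and then runs an inductive sign-counting argument (Proposition~\ref{prop:y_kloc}) to trap each $y_k(t)$ between its neighbours at time $0$; strict monotonicity is then read off from an explicit ODE (Lemma~\ref{lem:yODE}). Your alternative---that for any $y$ with $(A(y),B(y))\neq(0,0)$ the equation $e^{2(\tau+J_{u_0v_0})}=-B(y)/A(y)$ has at most one real solution, hence distinct trajectories have disjoint images---is shorter and avoids both the product identity and the ODE. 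Once simplicity is in hand, your deduction of disjoint images (including the case $A(y)=B(y)=0$, where the implicit function theorem forces the trajectory through $y$ to be the constant one) is correct, and the interval bounds follow immediately from $x_{k-1}(t)<x_k(t)<x_{k+1}(t)$.

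The gap is exactly where you flagged it: the simplicity step. Your parabola argument correctly reduces to the case $A(x^\ast)=B(x^\ast)=0$, but the remaining claim---that at such a common zero the sign of $-2\beta/\alpha$ is ``incompatible with any $t_\ast\geq 0$''---is not a routine computation, and the sketch you give does not close. Two specific problems: (i) $B$ is \emph{not} the partition function of a ferromagnetic Ising model on the contracted graph (it conditions on $\sigma_{u_0}=-\sigma_{v_0}$, an antiferromagnetic constraint), so you cannot invoke Lee--Yang for $B$ directly; (ii) even granting that $e^{\tau}A+e^{-\tau}B$ has only real zeros for $\tau\geq 0$, this is a one-sided pencil and does not yield interlacing of the zeros of $A$ and $B$ in the Hermite--Biehler sense, so there is no automatic conclusion about $\operatorname{sgn}A'(x^\ast)$ versus $\operatorname{sgn}B'(x^\ast)$. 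What you are trying to prove here is precisely the content of the Nishimori--Griffiths simplicity theorem (their Theorem~3.6 and Lemma~4.2), and their argument is not elementary---this is where connectedness of $G_{>0}$ enters in an essential, graph-theoretic way. The paper does not re-prove simplicity either; it quotes \cite{NG83} (Proposition~\ref{prop:simple}). You should do the same: cite \cite{NG83} for simplicity and for the trichotomy, and then present your one-solution argument for the interval containment as a streamlined alternative to the paper's Lemma~\ref{lem:ydifft}/Proposition~\ref{prop:y_kloc}/Lemma~\ref{lem:yODE} chain.
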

\begin{remark}
	Since all zeros of $Z_{G,t}$ (see \eqref{eq:evenzeros1} and \eqref{eq:oddzeros1}) 
are reflections and periodic translations of those zeros considered in the theorem, 
Theorem~\ref{thm:t} implies that each zero of $Z_{G,t}$ behaves in one of three ways
as $t$ increases: is constant; decreases strictly; increases strictly. This has 
already been proved in Theorem 4.8 of \cite{NG83}. Combining with 
Corollary~\ref{cor:simple} below, Theorem~\ref{thm:t}  yields the following new result: 
for any two zeros $x_k$ and $x_j$ satisfying $x_k(0)\neq x_j(0)$, their entire 
trajectories are disjoint, i.e., $x_k([0,\infty))\cap x_j([0,\infty))=\emptyset$.
\end{remark}
\begin{remark}\label{rem:inc}
	It was proved in \cite{CJN22} that $x_1(t)$ is always decreasing in $t\geq 0$ 
even without the assumption that $G_{>0}$ is connected. The following example shows 
that $x_2(t)$ can be increasing in $t$. Let $G:=G_4$ be the left graph in Figure \ref{fig:G_4G_5} with the interactions as indicated.
\begin{figure}
	\begin{center}
		\includegraphics{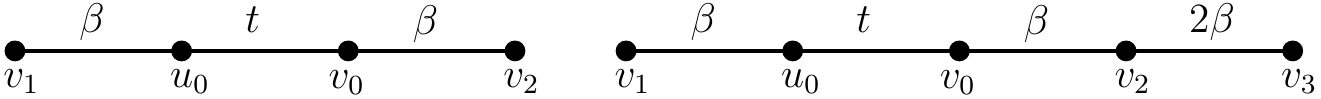}
		\caption{$G_4$ left, $G_5$ right; the interaction at each edge is indicated right above the edge.}\label{fig:G_4G_5}
	\end{center}
\end{figure}
Then a direct computation gives
	\begin{equation}
		\lim_{\beta\rightarrow\infty}x_1(t=0,\beta)=\lim_{\beta\rightarrow\infty}x_2(t=0,\beta)=\frac{\pi}{4},\lim_{\substack{t\rightarrow\infty\\\beta\rightarrow\infty}}x_1(t,\beta)=\frac{\pi}{8}, \lim_{\substack{t\rightarrow\infty\\\beta\rightarrow\infty}}x_2(t,\beta)=\frac{3\pi}{8}.
	\end{equation}
So by the continuity of $x_2$ in $t$ and $\beta$ (which follows from Hurwitz's theorem, see the proof of Lemma \ref{lem:cont} below), we may pick $\beta_0>0$ such that
\begin{equation}
	x_1(t=0,\beta=\beta_0)=x_2(t=0,\beta=\beta_0)<\frac{3\pi}{8}, \lim_{t\rightarrow\infty}x_2(t,\beta=\beta_0)>x_2(t=0,\beta=\beta_0).
\end{equation}
Then Theorem \ref{thm:t} implies that $x_2(t,\beta=\beta_0)$ is strictly increasing in $t>0$.
\end{remark}

\begin{remark}
	It is tempting to conjecture based on Theorem \ref{thm:t} 
that under its assumptions, all zeros except those trivial ones at $m\pi+\pi/2$ 
(when $n$ is odd) are strictly monotonic. The following example shows that this 
conjecture is false. Let $G:=G_5$ be the right graph in Figure~\ref{fig:G_4G_5} with the interactions as indicated.
Then the corresponding partition function is
\begin{align}
	Z_{G,t}(x)=&2e^t\left[e^{4\beta}\cos(5x)+(e^{2\beta}+e^{-2\beta}+1)\cos(3x)+(e^{2\beta}+e^{-2\beta}+e^{-4\beta}+1)\cos(x)\right]+\nonumber\\
	&2e^{-t}\left[2e^{2\beta}\cos(3x)+(e^{4\beta}+2e^{-2\beta}+e^{-4\beta}+2)\cos(x)\right].
\end{align}
One can check that if $\beta=\ln(1+\sqrt{2})/2$, then $Z_{G,t}(\arcsin(2^{-1/4}))=0$ for each $t\geq 0$. 
In particular, this implies that $\arcsin(2^{-1/4})$ is a zero of $Z_{G,t}(x)$, 
which does not change with~$t$.
\end{remark}

We next consider the Ising model on the complete graph. More precisely, we consider the partition function
\begin{equation}\label{eq:Z_ntdef}
	Z_{n,t}(x):=\sum_{\sigma\in\{-1,+1\}^{n}}\exp\left[t\left(\sum_{j=1}^n\sigma_j\right)^2+ix\sum_{j=1}^n\sigma_j\right].
\end{equation}
This partition function corresponds to the Ising model defined on the complete graph 
with $n$ vertices and with interaction $t$ at each edge; so it is the partition function for the Curie-Weiss model. 
Our second main result is
\begin{theorem}\label{thm:K_n}
	Let the zeros of $Z_{n,t}$ as a function of $x$ be defined as in \eqref{eq:evenzeros1} and \eqref{eq:oddzeros1} according to the parity of $n$. If $n$ is positive and even, then for each $k\in\{1,2,\dots,n/2\}$,
	\[x_k(t)\text{ is strictly decreasing in }t\geq 0 \text{ with }x_k(0)=\frac{\pi}{2}, \lim_{t\rightarrow\infty}x_k(t)=\frac{(2k-1)\pi}{2n};\]
	if $n$ is positive and odd, then $x_{(n+1)/2}(t)\equiv \pi/2$ and for each $k\in\{1,2,\dots,(n-1)/2\}$,
	\[x_k(t)\text{ is strictly decreasing in }t\geq 0 \text{ with }x_k(0)=\frac{\pi}{2}, \lim_{t\rightarrow\infty}x_k(t)=\frac{(2k-1)\pi}{2n}.\]
\end{theorem}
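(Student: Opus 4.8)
The overall plan is to avoid the delicate implicit-differentiation sign computation and instead reduce the motion of the zeros to a one-variable zero-counting problem in $t$. First I would record the explicit form of the partition function: grouping configurations by the total spin $S=\sum_j\sigma_j\in\{-n,-n+2,\dots,n\}$ turns \eqref{eq:Z_ntdef} into
\[
	Z_{n,t}(x)=\sum_{\substack{0\le m\le n,\ m\equiv n\ (\mathrm{mod}\ 2)}}a_m\,e^{tm^2}\cos(mx),\qquad a_m>0,
\]
with $a_m=2\binom{n}{(n+m)/2}$ for $m>0$ and $a_0=\binom{n}{n/2}$ when $n$ is even. Two endpoint computations fix the boundary data: at $t=0$ one has $Z_{n,0}(x)=(2\cos x)^n$, so every principal zero starts at $x_k(0)=\pi/2$; and since the top term $2e^{tn^2}\cos(nx)$ dominates as $t\to\infty$, Hurwitz's theorem (as invoked in Lemma~\ref{lem:cont}) forces $x_k(t)\to\frac{(2k-1)\pi}{2n}$. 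Finally, for each fixed $t>0$ the Curie-Weiss interaction is an Ising model on $K_n$ with all couplings equal to $t>0$, so $G_{>0}=K_n$ is connected and Theorem~\ref{thm:t} applies: the principal zeros are strictly separated, $0<x_1(t)<\dots<\pi/2$, so no two trajectories meet for $t>0$.

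The key step is to fix a height $x_0\in(0,\pi/2)$, away from the finitely many values where some $\cos(mx_0)$ vanishes, and study the single-variable function $t\mapsto Z_{n,t}(x_0)=\sum_m\big(a_m\cos(mx_0)\big)e^{m^2 t}$. This is a generalized Dirichlet polynomial with distinct exponents $m^2$, so by the classical sign-change bound for exponential sums (a form of Descartes' rule proved by repeatedly dividing out the lowest exponential and applying Rolle), the number $N(x_0)$ of its zeros on $[0,\infty)$, counted with multiplicity, is at most the number $S(x_0)$ of sign changes of the coefficient sequence $\big(a_m\cos(mx_0)\big)_m$. Since $a_m>0$, this equals the number of sign changes of $\big(\cos(mx_0)\big)_m$, and a direct count (each step $m\to m+2$ raises $mx_0$ by $2x_0<\pi$, hence crosses at most one odd multiple of $\pi/2$) shows
\[
	S(x_0)=\#\Big\{j\ge 1:\ \tfrac{(2j-1)\pi}{2n}<x_0\Big\}=:L(x_0).
\]

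To close the argument I would establish the matching lower bound and run a squeeze. Each principal trajectory $x_j$ with $j\le L(x_0)$ is continuous, starts at $x_j(0)=\pi/2>x_0$, and tends to $\frac{(2j-1)\pi}{2n}<x_0$, so by the intermediate value theorem it attains $x_0$ at some time $t_j>0$; by the strict separation from Theorem~\ref{thm:t} these $t_j$ are pairwise distinct, producing at least $L(x_0)$ distinct zeros of $t\mapsto Z_{n,t}(x_0)$, i.e. $N(x_0)\ge L(x_0)$. Hence $L(x_0)\le N(x_0)\le S(x_0)=L(x_0)$, so every eligible trajectory meets the level $x_0$ exactly once and transversally. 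As this holds for every regular height in its range, the continuous function $x_j$ attains each value of $\big(\frac{(2j-1)\pi}{2n},\pi/2\big)$ exactly once and is therefore strictly decreasing, with the stated endpoints; the trivial zero $x_{(n+1)/2}(t)\equiv\pi/2$ for odd $n$ is immediate since then every exponent $m$ is odd and $\cos(m\pi/2)=0$.

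The main obstacle is making the two-sided count tight: one needs the upper bound $N(x_0)\le S(x_0)$ and the exact identity $S(x_0)=L(x_0)$ to hold simultaneously, so that the squeeze pins down $N(x_0)=L(x_0)$ and forces transversality. The upper bound rests on the exponential-sum sign-change theorem, whose hypotheses (distinct real exponents and genuine sign changes) must be verified, while the identity $S(x_0)=L(x_0)$ requires careful bookkeeping of how $\cos(mx_0)$ changes sign along the progression $m\equiv n\ (\mathrm{mod}\ 2)$; dealing with the measure-zero set of heights where some coefficient vanishes, and separating the even and odd $n$ cases, are the places where care is needed.
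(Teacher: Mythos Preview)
Your argument is correct and takes a genuinely different route from the paper. The paper derives the autonomous ODE system \eqref{eq:odef} for the zeros, observes that the off-diagonal partial derivatives \eqref{eq:f_kdergeq0} are nonnegative so the system is cooperative (type~K), and then invokes the order-preserving flow property (Proposition~\ref{prop:Smi}) together with the fact that $\mathbf{x}(\tau)\ll\mathbf{x}(0)$ for any $\tau>0$ to conclude $\mathbf{x}(t+\tau)\ll\mathbf{x}(t)$. Your approach instead freezes the spatial variable: writing $Z_{n,t}(x_0)=\sum_m (a_m\cos(mx_0))e^{m^2 t}$ and applying the Descartes-type sign-change bound for exponential sums gives $N(x_0)\le S(x_0)$, while the intermediate-value lower bound plus the separation of zeros at each fixed $t>0$ gives $N(x_0)\ge L(x_0)$; the combinatorial identity $S(x_0)=L(x_0)$ (which follows since the step $2x_0<\pi$ allows at most one sign flip of $\cos(\cdot)$ per increment) closes the squeeze. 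This is more elementary---no ODEs or monotone dynamical systems---and exploits the special feature of the Curie-Weiss model that $Z_{n,t}(x)$ is a short generalized Dirichlet polynomial in $t$; the paper's cooperative-system approach is less dependent on this explicit form and packages the monotonicity as a structural property of the flow.

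Two minor remarks. First, what you actually need from the paper is not Theorem~\ref{thm:t} itself (which concerns varying a single edge) but only the separation $0<x_1(t)<\cdots<\pi/2$ for each fixed $t>0$; this is Corollary~\ref{cor:simple} (or Lemma~\ref{lem:Zsimple} via the backward heat equation), which applies since at any $t>0$ the Curie-Weiss couplings make $G_{>0}=K_n$ connected. Second, the passage from ``each regular height is attained exactly once'' to ``strictly decreasing'' is routine but should be stated: real-analyticity of $x_j$ on $(0,\infty)$ rules out flat pieces, and a local extremum would force a nearby regular level to be hit at least twice, contradicting the tight count; this also shows $x_j(t)>(2j-1)\pi/(2n)$ for all $t$, since dipping below would force a crossing of a regular level with $L(x_0)<j$.
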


\begin{remark}
	According to Conjecture 2 in Section IVC of \cite{NG83}, Theorem \ref{thm:K_n} 
is expected to hold for any connected graph $G=(V,E)$ with $|V|=n$ and $J_e=t$ for each $e\in E$. 
For the Ising model on such a general graph, Corollary 1 in \cite{CJN22} 
implies that its first zero is strictly decreasing.
\end{remark}

The proofs for both Theorems \ref{thm:t} and \ref{thm:K_n} rely on a system of ordinary differential equations (ODEs) which governs the motion of all principal zeros. The ODEs are derived from the factorization formula for the partition function, and an extra relation for the partition function at different $t$ for the  case of Theorem \ref{thm:t}. In Section \ref{sec:t}, we prove Theorem 1 by using these ODEs and a relation among different zeros (see Lemma \ref{lem:ydifft} below). In Section~\ref{sec:K_n}, we prove that for the Curie-Weiss model, it is a monotone dynamical system (more precisely, a cooperative system), which enables us to complete the proof of Theorem~\ref{thm:K_n}.

\section{Motion with respect to one interaction}\label{sec:t}
In this section, we prove Theorem \ref{thm:t}. To simplify the notation, we write $Z_t$ for $Z_{G,t}$, which is defined in \eqref{eq:Z_tdef}. Throughout this section, $n:=|V|$ is the total number of vertices. We first prove that $Z_t(x)$ has the following factorization formula.
\begin{lemma}\label{lem:fac}
	Let $G=(V,E)$ be a finite graph and $\mathbf{J}$ be ferromagnetic pair interactions on $G$. For each $t\geq 0$, any $ x\in\mathbb{C}$,
	\begin{equation}\label{eq:fac}
		Z_t(x)=Z_t(0)\prod_{j=1}^{\infty}\left(1-\frac{x^2}{x_j^2(t)}\right)=
		\begin{cases}
			Z_t(0)\prod_{j=1}^{n/2}\frac{\sin^2x_j(t)-\sin^2x}{\sin^2x_j(t)}, &\text{if } n \text{ even}\\
			Z_t(0)\cos(x)\prod_{j=1}^{(n-1)/2}\frac{\sin^2x_j(t)-\sin^2x}{\sin^2x_j(t)}, &\text{if } n \text{ odd},
		\end{cases}
	\end{equation}
where $0<x_1(t)\leq x_2(t)\leq\dots$ are all positive zeros of $Z_t(x)$ (listed according to their multiplicities), and $\sum_{j=1}^{\infty}x_j^{-2}(t)<\infty$.
\end{lemma}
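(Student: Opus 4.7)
My plan is to first establish the closed-form (case-wise) expressions on the right of~\eqref{eq:fac} using Lee-Yang and a direct factorization of $Z_t$ as a trigonometric polynomial, and then derive the infinite product form from it via the Euler product for $\sin$.

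Since $\sum_{u\in V}\sigma_u$ takes values in $\{-n,-n+2,\dots,n\}$, the function $e^{-inx}Z_t(x)$ is a polynomial $P(w)$ of degree $n$ in $w:=e^{-2ix}$. By the Lee-Yang circle theorem all $n$ roots of $P$ lie on $|w|=1$, and the spin-flip symmetry $Z_t(-x)=Z_t(x)$ pairs them as $\{e^{2ix_k},e^{-2ix_k}\}$ with $x_k\in(0,\pi/2]$ for $k=1,\dots,\lfloor n/2\rfloor$, together with an additional simple root at $w=-1$ when $n$ is odd (corresponding to the trivial zero $x=\pi/2$). I would then factor $P$ accordingly and apply the identities
\[(w-e^{2ix_k})(w-e^{-2ix_k})\big|_{w=e^{-2ix}}=4e^{-2ix}(\sin^2 x_k-\sin^2 x),\qquad (w+1)\big|_{w=e^{-2ix}}=2e^{-ix}\cos x,\]
both of which follow from the double-angle identity $\cos 2y=1-2\sin^2 y$. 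A short check shows that the accumulated exponential prefactors on the right collect into exactly $e^{-inx}$, cancelling the left-hand prefactor. Evaluating the resulting identity at $x=0$ then fixes the overall multiplicative constant and yields the case-wise formula.

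For the infinite product form, I would use $\sin^2 x_k-\sin^2 x=\sin(x_k-x)\sin(x_k+x)$ together with the classical Euler expansion $\sin y=y\prod_{m\geq1}(1-y^2/(m\pi)^2)$. A direct rearrangement gives
\[\frac{\sin^2 x_k-\sin^2 x}{\sin^2 x_k}=\left(1-\frac{x^2}{x_k^2}\right)\prod_{m\geq1}\left(1-\frac{x^2}{(m\pi-x_k)^2}\right)\left(1-\frac{x^2}{(m\pi+x_k)^2}\right),\]
whose factors are indexed precisely by the positive zeros $\pm x_k+m\pi$ of this principal factor. For $n$ odd, the standard product expansion $\cos x=\prod_{m\geq 0}(1-(2x/((2m+1)\pi))^2)$ supplies the trivial zeros $\pi/2+m\pi$. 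Reindexing by the ordered list $0<x_1(t)\leq x_2(t)\leq\dots$ of all positive zeros (with multiplicity) produces the stated infinite product. Convergence of $\sum_j x_j^{-2}$ is automatic, as each interval $(m\pi,(m+1)\pi]$ contains exactly $n$ positive zeros, so $x_j$ grows at least linearly in $j$. The argument is essentially bookkeeping; the only delicate moment is the simplification above and the verification that the accumulated exponential prefactors combine to cancel the $e^{-inx}$ on the left.
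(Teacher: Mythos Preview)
Your argument is correct, and it proceeds in the \emph{opposite direction} from the paper's proof. The paper first invokes the Hadamard factorization theorem (using that $Z_t$ is an even entire function of order~$1$) to obtain the infinite product $Z_t(x)=Z_t(0)\prod_j(1-x^2/x_j^2(t))$, and then collapses it to the finite trigonometric product via the Euler product for $\sin$ (deriving the identity $\frac{\sin(y+x)\sin(y-x)}{\sin^2 y}=\prod_{m\geq 0}(1-\tfrac{x^2}{(y+m\pi)^2})\prod_{m\geq 1}(1-\tfrac{x^2}{(-y+m\pi)^2})$). You instead start from the polynomial structure in $w=e^{-2ix}$ plus Lee--Yang to get the finite product directly, and then \emph{expand} it into the infinite product using the same Euler identity. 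Your route is arguably more elementary, since it bypasses Hadamard entirely and makes the summability $\sum_j x_j^{-2}<\infty$ a triviality (linear growth of $x_j$) rather than a consequence of the order-$1$ estimate; the paper's route, on the other hand, packages the first equality as an immediate citation.

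One small imprecision: you assert that for $n$ odd the root at $w=-1$ is \emph{simple}. Under the hypotheses of this lemma (no connectedness assumption on $G_{>0}$), zeros need not be simple, so $w=-1$ could a priori have any odd multiplicity $2m+1$. This does not actually affect your argument: the extra $2m$ copies of $-1$ can be absorbed into $m$ degenerate pairs $\{e^{2ix_k},e^{-2ix_k}\}$ with $x_k=\pi/2$, for which your identity gives $4e^{-2ix}(\sin^2(\pi/2)-\sin^2 x)=4e^{-2ix}\cos^2 x$, consistent with $(w+1)^2|_{w=e^{-2ix}}=4e^{-2ix}\cos^2 x$. So the factorization and the exponential bookkeeping still close up, but you should drop the word ``simple'' or add this remark.
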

\begin{proof}
	The first equality in \eqref{eq:fac} and $\sum_{j=1}^{\infty}x_j^{-2}(t)<\infty$ follow from the Hadamard factorization theorem (see, e.g., pp. 206-212 of \cite{Ahl78}) and the fact that $Z_t(x)$ for fixed $t\geq0$ is an even entire function of order 1; see also Proposition 2 of \cite{New75}. By the product representation of sine (see, e.g., (24) on p. 197 of \cite{Ahl78}), we have
	\begin{equation}
		\sin(\pi z)=\pi z\prod_{m=1}^{\infty}\left(1-\frac{z^2}{m^2}\right), \forall z\in\mathbb{C}.
	\end{equation}
It then follows that for any $y\in\mathbb{C}\setminus\{k\pi: k\in\mathbb{Z}\}$ and any $ x\in\mathbb{C}$,
\begin{equation}\label{eq:sinprod}
	\frac{\sin(y+x)\sin(y-x)}{\sin^2 y}=\prod_{m=0}^{\infty}\left(1-\frac{x^2}{(y+\pi m)^2}\right)\prod_{m=1}^{\infty}\left(1-\frac{x^2}{(-y+\pi m)^2}\right).
\end{equation}
If $n$ is even, the first equality in \eqref{eq:fac} and \eqref{eq:evenzeros1} imply that
\begin{align}
	Z_t(x)&=Z_t(0)\prod_{j=1}^{n/2}\prod_{m=0}^{\infty}\left(1-\frac{x^2}{(x_j+\pi m)^2}\right)\prod_{m=1}^{\infty}\left(1-\frac{x^2}{(-x_j+\pi m)^2}\right)\\
	&=Z_t(0)\prod_{j=1}^{n/2}\frac{\sin(x_j+x)\sin(x_j-x)}{\sin^2x_j}=Z_t(0)\prod_{j=1}^{n/2}\frac{\cos(2x)-\cos(2x_j)}{2\sin^2x_j}\\
	&=Z_t(0)\prod_{j=1}^{n/2}\frac{\sin^2x_j(t)-\sin^2x}{\sin^2x_j(t)},
\end{align}
where we have used \eqref{eq:sinprod} in the second equality. The proof for odd $n$ is similar and uses the following product representation for cosine
\begin{equation}
	\cos(z)=\prod_{m=0}^{\infty}\left(1-\frac{z^2}{(m\pi+\pi/2)^2}\right), \forall z\in\mathbb{C}.
\end{equation}
This completes the proof of the lemma.
\end{proof}

An easy but useful property about $x_k(t)$ is that they are continuous in $t$.
\begin{lemma}\label{lem:cont}
	For any finite graph $G=(V,E)$ and any ferromagnetic pair interactions $\mathbf{J}$ on $G$, each positive zero defined in Lemma \ref{lem:fac}, $x_k(t)$ with $k\in\mathbb{N}$, is continuous in $t\in[0,\infty)$.
\end{lemma}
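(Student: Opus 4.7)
The plan is to use Hurwitz's theorem in the standard way to promote analytic continuity of the partition function into continuity of its ordered zeros. Since $Z_t(x)$ is an entire function of $x$ for each fixed $t$ (indeed a trigonometric polynomial of degree $n$) and is jointly analytic in $(t,x)\in\mathbb{R}\times\mathbb{C}$, fixing $t_0\ge 0$ we have $Z_t\to Z_{t_0}$ uniformly on every compact subset of $\mathbb{C}$ as $t\to t_0$. The function $Z_{t_0}$ is not identically zero (for example $Z_{t_0}(0)>0$), so Hurwitz's theorem applies.

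Fix $k\in\mathbb{N}$ and let $c:=x_k(t_0)$. Let $b\le k\le a$ be the smallest and largest indices, respectively, with $x_b(t_0)=\dots=x_a(t_0)=c$, so that $c$ is a zero of $Z_{t_0}$ of multiplicity $m:=a-b+1$, and set $c^-:=x_{b-1}(t_0)$ (with $c^-:=0$ if $b=1$) and $c^+:=x_{a+1}(t_0)$. Choose $\varepsilon>0$ so small that $[c-\varepsilon,c+\varepsilon]\subset(c^-,c^+)$. In the open disk $D_\varepsilon(c)\subset\mathbb{C}$, the only zero of $Z_{t_0}$ is $c$ itself, with multiplicity $m$. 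By Hurwitz's theorem, for all $t$ sufficiently close to $t_0$, $Z_t$ has exactly $m$ zeros in $D_\varepsilon(c)$, counted with multiplicity; by the Lee-Yang circle theorem these zeros are real, hence they lie in $(c-\varepsilon,c+\varepsilon)$.

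To conclude that these $m$ zeros are precisely $x_b(t),x_{b+1}(t),\dots,x_a(t)$, I would apply the same Hurwitz argument locally around each $x_j(t_0)$ with $j<b$, together with the observation that on any compact subset of $(0,\infty)$ disjoint from $\{x_j(t_0):j\ge 1\}$ the function $Z_t$ has no zeros for $t$ close to $t_0$ (by uniform convergence and nonvanishing of $Z_{t_0}$ on such a set). Summing multiplicities, for $t$ close to $t_0$ the total number of positive zeros of $Z_t$ in $(0,c-\varepsilon]$ is exactly $b-1$, and in $(0,c+\varepsilon)$ it is exactly $a$. Consequently, the ordered zeros $x_b(t),\dots,x_a(t)$ are precisely the ones in $(c-\varepsilon,c+\varepsilon)$, so in particular $|x_k(t)-c|<\varepsilon$. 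Since $\varepsilon>0$ was arbitrary (subject to the choice above), this proves $x_k(t)\to x_k(t_0)$ as $t\to t_0$.

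There is no real obstacle here; the only point demanding a moment of care is the bookkeeping that matches the $m$ zeros produced by Hurwitz in $D_\varepsilon(c)$ with the $m$ consecutive indices $b,\dots,a$ in the ordered list, which is handled by combining the local Hurwitz counts with the no-new-zeros-away-from-$\{x_j(t_0)\}$ observation.
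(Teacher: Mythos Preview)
Your proof is correct and follows exactly the approach the paper takes: the paper's own proof is a one-line citation of Hurwitz's theorem, and you have simply spelled out the standard details (uniform convergence of $Z_t$ to $Z_{t_0}$ on compacta, Lee--Yang to keep the zeros real, and the index bookkeeping via local Hurwitz counts). There is nothing to add.
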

\begin{proof}
	This follows from Hurwitz's theorem (see, e.g., p. 4 of \cite{Mar66} or Lemma 6 of \cite{CJN21}).
\end{proof}

The next result by Nishimori and Griffiths \cite{NG83} says that all zeros of $Z_t(x)$ are simple as long as the graph associated with $\mathbf{J}$, $G_{>0}=(V,E_{>0})$ with $E_{>0}$ defined by \eqref{eq:E+}, is connected.
\begin{proposition}[\cite{NG83}]\label{prop:simple}
	Let $G=(V,E)$ be a finite graph and $\mathbf{J}$ be ferromagnetic pair interactions on $G$. Suppose that $G_{>0}=(V,E_{>0})$ defined by \eqref{eq:E+} is connected. Then for any fixed $t\in[0,\infty)$, all zeros of $Z_t(x)$ are simple.
\end{proposition}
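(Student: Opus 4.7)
My approach is by contradiction using a perturbation in couplings together with the Lee-Yang reality of zeros. Note first that the proposition is equivalent to: for any ferromagnetic Ising model on $G$ with $G_{>0}$ connected, the partition function has only simple zeros in $x$ (apply this to the augmented interactions $\tilde{\mathbf{J}}$). This equivalent form allows perturbation of \emph{any} coupling $J_e$ with $e\in E_{>0}$. After reducing to $G=G_{>0}$ (edges with zero coupling may be dropped without changing the partition function) and handling the trivial base case $|V|=1$, I proceed by induction on $|V|$.

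Assume $|V|\ge 2$ and suppose for contradiction that the partition function $Z(x)$ has a zero at $x_0\in(0,\pi/2)$ of multiplicity $m\ge 2$. Fix any edge $e=ab\in E_{>0}$ and perturb $J_e$ by $s$ in a small two-sided neighborhood of $0$ (possible since $J_e>0$). Taylor-expanding,
\[
Z_{J_e+s}(x_0+r)=c_0 r^m+c_1 s+(\text{higher-order terms}),
\]
with $c_0\ne 0$ and $c_1=\partial_{J_e}Z(x_0)$. If $c_1\ne 0$, the $m$ perturbed zeros satisfy $r^m\approx -c_1 s/c_0$, whose solutions include at most two real roots when $m$ is even and exactly one when $m$ is odd; so at least $m-2$ (respectively $m-1$) zeros become complex for some sign of $s$, contradicting the Lee-Yang theorem applied at the perturbed coupling. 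Hence $\partial_{J_e}Z(x_0)=0$ for every $e\in E_{>0}$, and combining with $Z(x_0)=0$ and $\partial_{J_{ab}}Z(x)=\sum_\sigma\sigma_a\sigma_b e^{H(\sigma)+ixM(\sigma)}$ yields, for every $ab\in E_{>0}$,
\[
\sum_{\sigma:\,\sigma_a=\sigma_b}\!e^{H(\sigma)+ix_0M(\sigma)}=0\quad\text{and}\quad\sum_{\sigma:\,\sigma_a\ne\sigma_b}\!e^{H(\sigma)+ix_0M(\sigma)}=0.
\]

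Up to the positive factor $e^{J_{ab}}$, the first identity says that $x_0$ is a zero of the partition function of the ferromagnetic Ising model on the contracted graph $G/ab$, with a nonuniform external field ($2ix_0$ at the merged vertex and $ix_0$ elsewhere). The positive-coupling subgraph of $G/ab$ remains connected and has one fewer vertex. Iterating the contraction along a spanning tree $T$ of $G_{>0}$ collapses the graph to a single super-vertex carrying field $i|V|x_0$, and the corresponding partition function $2\cos(|V|x_0)$ restricts $x_0$ to the finite arithmetic progression $\{(2k+1)\pi/(2|V|)\}_{k\in\mathbb{Z}}$. Combining this with the second family of identities above, together with the constraints obtained from running the argument using contractions along different edges of $G_{>0}$, one should be able to derive a contradiction with $x_0\in(0,\pi/2)$ and the connectedness of $G_{>0}$.

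The main obstacle is this last step: the inductive hypothesis must be strengthened to accommodate nonuniform external fields of the form $h_v=i\alpha_v x$ with positive integer $\alpha_v$ that arise naturally from iterated contractions. Two routes appear workable: (i) pass to the multivariate partition function framework (treating each $z_v=e^{-2h_v}$ independently) and use a multivariate Lee-Yang argument to track simplicity of zeros through the iteration, or (ii) replace each contraction by its dual vertex-splitting operation, returning to the uniform-field setting on a larger graph. The connectedness of $G_{>0}$ enters essentially in ensuring that the chain of contractions preserves ferromagneticity at every step and that the resulting overdetermined system of vanishing constraints on $x_0$ admits no solution.
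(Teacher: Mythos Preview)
The paper does not give a self-contained proof here; it simply cites Lemma~4.2 and Theorem~3.6(i) of Nishimori--Griffiths, whose argument is carried out in the multivariate setting with an independent complex field at each vertex. So there is no detailed in-paper argument to compare against; the relevant comparison is with the cited source.

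Your perturbation step is correct and is indeed the natural opening move: if $x_0$ were a zero of multiplicity $m\ge2$, a Newton-polygon (equivalently Puiseux/Rouch\'e) analysis of $Z_{J_e+s}(x)$ near $(x_0,0)$ forces non-real zeros for one sign of small $s$, contradicting Lee--Yang, hence $\partial_{J_e}Z(x_0)=0$ for every $e\in E_{>0}$, and the two restricted sums over $\{\sigma_a=\sigma_b\}$ and $\{\sigma_a\neq\sigma_b\}$ both vanish at $x_0$.

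The gap you flag is genuine and is not a matter of bookkeeping. From $\partial_{J_{ab}}Z(x_0)=0$ you learn that the contracted partition function $Z_{G/ab}$ vanishes at $x_0$, but \emph{not} that it has a multiple zero there, nor that its own coupling-derivatives vanish at $x_0$. Without one of these you cannot iterate: after a single contraction the inductive hypothesis says only that $x_0$ is a simple zero of $Z_{G/ab}$, which is no contradiction, and your spanning-tree collapse to $2\cos(|V|x_0)$ never gets off the ground. Pushing through a second contraction would require mixed derivatives such as $\partial_{J_e}\partial_{J_{e'}}Z(x_0)=0$, which your first-order perturbation argument does not supply in the critical case $m=2$. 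Your suggested route~(i) is exactly the substance of the Nishimori--Griffiths proof---pass to separate variables $z_v=e^{-2h_v}$, establish the multivariate Lee--Yang property, and extract simplicity from the resulting nondegeneracy---but that is a substantial argument in its own right, not a small patch, and your proposal does not carry it out. As it stands the proof is an outline whose decisive step is missing.
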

\begin{proof}
	This follows from Lemma 4.2 and Theorem 3.6(i) (with $g_j=1$ for each $j$ and $A=\emptyset$) of \cite{NG83}.
\end{proof}

\begin{remark}
	Proposition \ref{prop:simple} and the (analytic) implicit function theorem imply that $x_k$ are analytic in $t\geq0$. This is certainly stronger than Lemma \ref{lem:cont} but with the extra condition that  $G_{>0}$ is connected.
\end{remark}
An immediate consequence of Proposition \ref{prop:simple} is the following.
\begin{corollary}\label{cor:simple}
	Under the assumptions of Proposition \ref{prop:simple}, we have for $t\in[0,\infty)$,
	\begin{align}
		&0<x_1(t)<x_2(t)<\dots<x_{n/2}(t)<\pi/2 \text{ if } n \text{ is even},\\
		&0<x_1(t)<x_2(t)<\dots<x_{(n-1)/2}(t)<\pi/2=x_{(n+1)/2} \text{ if } n \text{ is odd}.
	\end{align}
\end{corollary}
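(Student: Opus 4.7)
The corollary is labeled an ``immediate consequence'' of Proposition \ref{prop:simple}, so my plan is to combine the simplicity of the zeros with the weak orderings already built into the definitions \eqref{eq:evenzeros1} and \eqref{eq:oddzeros1}. Let me fix $t \in [0,\infty)$ throughout.

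First, to upgrade each weak inequality $x_k(t) \leq x_{k+1}(t)$ to a strict one, I would argue by contradiction: if $x_k(t) = x_{k+1}(t)$ for some consecutive pair in the enumeration, then this common value is a zero of $Z_t$ listed with multiplicity at least two, which directly contradicts Proposition \ref{prop:simple}. This single observation covers $x_1(t) < x_2(t) < \dots < x_{n/2}(t)$ in the even case, and $x_1(t) < \dots < x_{(n-1)/2}(t)$ in the odd case. In the odd case it also yields $x_{(n-1)/2}(t) < \pi/2$, since $\pi/2 = x_{(n+1)/2}(t)$ by definition, so $x_{(n-1)/2}(t) = \pi/2$ would put multiplicity at least two at $\pi/2$.

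Next I would handle the remaining endpoint inequality $x_{n/2}(t) < \pi/2$ in the even case. Equality would mean $x = \pi/2$ is a zero of $Z_t$, i.e., $z = e^{-2ix} = -1$ is a zero of the degree-$n$ polynomial $Z_t/\exp[h|V|]$ in $z$. This polynomial has real coefficients, so its zeros that are not on the real axis come in complex-conjugate pairs, contributing even total multiplicity. Among real points on the unit circle, $z = 1$ is not a zero since $Z_{G,\mathbf{J},0} > 0$, so the only possible real zero on the unit circle is $z = -1$. Since $n$ is even, the multiplicity at $z = -1$ must therefore also be even; being nonzero, it is at least $2$, contradicting Proposition \ref{prop:simple}. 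Hence $x_{n/2}(t) < \pi/2$.

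There is no serious obstacle here: the corollary really is a bookkeeping consequence of simplicity plus the already-established orderings. The only slightly non-obvious step is the endpoint argument for even $n$, where one must notice that simplicity alone does not forbid $x = \pi/2$ from being a root, but the parity argument in the variable $z$ does.
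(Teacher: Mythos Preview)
Your proof is correct and follows essentially the same approach as the paper: both use simplicity to upgrade the weak orderings to strict ones, and both handle the endpoint $\pi/2$ by a parity count (you argue in the $z$-variable via complex-conjugate pairs, the paper argues in the $x$-variable via the reflection $x\mapsto \pi-x$, which is the same symmetry). The arguments are interchangeable.
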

\begin{proof}
	Proposition \ref{prop:simple}, \eqref{eq:thetadef} and \eqref{eq:xandtheta} imply that $Z_t(x)$ has exactly $n$ simple zeros in $(0,\pi)$. From \eqref{eq:evenzeros1} and \eqref{eq:oddzeros1}, we know both $x_j(t)$ and $\pi-x_j(t)$ are zeros in $(0,\pi)$ if $x_j(t)\in(0,\pi/2)$. Therefore, $\pi/2$ cannot be a simple zero of $Z_t(x)$ if $n$ is even, and it is a simple zero if $n$ is odd.
\end{proof}

The following lemma relates $Z_t(x)$ at different $t$.
\begin{lemma}\label{lem:Zdifft}
	For any finite graph $G=(V,E)$ and any pair interactions $\mathbf{J}$ on $G$,
	\begin{align}
		Z_{t-\delta}(x)=&-\sinh(\delta)\sum_{\sigma\in\{-1,+1\}^{V}}\sigma_{u_0}\sigma_{v_0}\exp\left[t\sigma_{u_0}\sigma_{v_0}+\sum_{uv\in E}J_{uv}\sigma_u\sigma_v+ix\sum_{u\in V}\sigma_u\right]\nonumber\\
		&+\cosh(\delta)Z_t(x), \forall t\in\mathbb{C}, \forall \delta\in\mathbb{C}, \forall x\in\mathbb{C}.
	\end{align}
\end{lemma}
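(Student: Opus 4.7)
The plan is to use the fact that $\sigma_{u_0}\sigma_{v_0} \in \{-1,+1\}$, which allows us to rewrite the exponential $\exp[-\delta \sigma_{u_0}\sigma_{v_0}]$ in terms of $\cosh$ and $\sinh$. Concretely, for any $s \in \{-1,+1\}$ one has the identity
\begin{equation}
e^{-\delta s} = \cosh(\delta) - s\sinh(\delta),
\end{equation}
since this is the correct value at both $s = 1$ (giving $e^{-\delta}$) and $s = -1$ (giving $e^{\delta}$).

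Starting from the definition \eqref{eq:Z_tdef}, I would write
\begin{equation}
\exp\bigl[(t-\delta)\sigma_{u_0}\sigma_{v_0}\bigr] = \exp\bigl[t\sigma_{u_0}\sigma_{v_0}\bigr]\bigl(\cosh(\delta) - \sigma_{u_0}\sigma_{v_0}\sinh(\delta)\bigr),
\end{equation}
substitute into $Z_{t-\delta}(x)$, and split the resulting sum into two pieces. The $\cosh(\delta)$ piece is exactly $\cosh(\delta) Z_t(x)$, while the $-\sinh(\delta)$ piece is the asserted sum with the $\sigma_{u_0}\sigma_{v_0}$ insertion. All steps are algebraic manipulations valid for arbitrary complex $t, \delta, x$, so the identity extends to the claimed domain without any analytic continuation argument.

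There is no real obstacle: the only thing to verify is the elementary two-point identity above, and that the rearrangement of the (finite) sum over $\sigma$ is justified term by term. The proof is thus a direct computation, and the main conceptual content is simply the observation that any function of a $\pm 1$-valued variable is affine in that variable, which is what converts a shift in $t$ into a linear combination of $Z_t$ and its $\sigma_{u_0}\sigma_{v_0}$-weighted analogue.
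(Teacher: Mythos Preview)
Your proof is correct and more direct than the paper's. The paper proceeds by Taylor-expanding $Z_t(x)$ in $t$ around $t_0$, observing that even-order $t$-derivatives reproduce $Z_t(x)$ while odd-order derivatives produce the $\sigma_{u_0}\sigma_{v_0}$-weighted sum (since $(\sigma_{u_0}\sigma_{v_0})^{2k}=1$), and then resumming the even and odd parts of the series into $\cosh(\delta)$ and $-\sinh(\delta)$. Your argument bypasses the Taylor expansion entirely by applying the pointwise identity $e^{-\delta s}=\cosh(\delta)-s\sinh(\delta)$ for $s\in\{-1,+1\}$ inside each term of the finite sum; this is exactly what the paper's resummation accomplishes, but in one algebraic step rather than via an infinite series. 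Both arguments ultimately rest on the same fact that $\sigma_{u_0}\sigma_{v_0}$ squares to $1$, but yours makes the elementary nature of the identity more transparent and avoids any appeal to analyticity or convergence.
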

\begin{proof}
	From \eqref{eq:Z_tdef}, it is easy to see that $Z_t(x)$ is analytic in $t$ and $x$. So for any fixed $t_0\in\mathbb{C}$ and $x\in\mathbb{C}$, the Taylor expansion of $Z_{t}(x)$ at $t_0$ gives
	\begin{equation}\label{eq:Ztaylor}
		Z_{t_0-\delta}(x)=\sum_{k=0}^{\infty}\left.\frac{\partial^{2k+1}Z_t(x)}{\partial t^{2k+1}}\right|_{t=t_0}\frac{(-\delta)^{2k+1}}{(2k+1)!}+\sum_{k=0}^{\infty}\left.\frac{\partial^{2k}Z_t(x)}{\partial t^{2k}}\right|_{t=t_0}\frac{(-\delta)^{2k}}{(2k)!}.
	\end{equation}
Note that for any $k\in\mathbb{N}\cup\{0\}$,
\begin{align}
	&\frac{\partial^{2k}Z_t(x)}{\partial t^{2k}}=Z_t(x),\\
	&\frac{\partial^{2k+1}Z_t(x)}{\partial t^{2k+1}}=\sum_{\sigma\in\{-1,+1\}^{V}}\sigma_{u_0}\sigma_{v_0}\exp\left[t\sigma_{u_0}\sigma_{v_0}+\sum_{uv\in E}J_{uv}\sigma_u\sigma_v+ix\sum_{u\in V}\sigma_u\right].
\end{align}
The lemma follows by plugging the last two equations into \eqref{eq:Ztaylor}.
\end{proof}

We make the following monotonic change of variables for each $t\in[0,\infty)$
\begin{equation}\label{eq:yandx}
	y_k(t):=\sin^2 x_k(t), \forall k\in\{1,\dots,n/2\} \text{ if }n \text{ even}, \forall k\in\{1,\dots,(n+1)/2\} \text{ if }n \text{ odd}.
\end{equation}

The following lemma gives a relation for $y_k$ at different times.
\begin{lemma}\label{lem:ydifft}
	Let $G=(V,E)$ be a finite graph and  $\mathbf{J}$  be ferromagnetic pair interactions on $G$. If $n$ is even, then for each $k\in\{1,2,\dots,n/2\}$, we have
	\begin{equation}\label{eq:y_kdt}
		\prod_{j=1}^{n/2}\frac{y_j(0)-y_k(t)}{y_j(0)}=-K(t,s)\prod_{j=1}^{n/2}\frac{y_j(s)-y_k(t)}{y_j(s)}, \forall t,\forall s \text{ satisfying } 0<t<s,
	\end{equation}
where 
\begin{equation}
	K(t,s):=\frac{Z_s(0)\sinh(t)}{Z_0(0)\sinh(s-t)}>0.
\end{equation}
If $n$ is odd, the last two displayed equations still hold for each $k\in\{1,2,\dots,(n-1)/2\}$ but one needs to replace $n$ by $n-1$ in \eqref{eq:y_kdt}.
\end{lemma}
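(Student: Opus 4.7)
The plan is to derive a clean linear identity among the partition functions at the three times $0$, $t$, and $s$, and then evaluate it at $x = x_k(t)$, where the factorization formula from Lemma \ref{lem:fac} does all the remaining work.

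The key observation is that $Z_t(x)$ depends on $t$ only through $e^{\pm t}$: splitting the sum in \eqref{eq:Z_tdef} according to whether $\sigma_{u_0}\sigma_{v_0}$ equals $+1$ or $-1$ gives a representation
$$Z_t(x) \;=\; e^{t} P(x) + e^{-t} Q(x),$$
with $P$ and $Q$ independent of $t$. Consequently $Z_0, Z_t, Z_s$ all lie in the two-dimensional linear span of $P$ and $Q$ and therefore must satisfy a linear relation. A short elimination (or, equivalently, two applications of Lemma \ref{lem:Zdifft}) produces
$$\sinh(s)\,Z_t(x) \;=\; \sinh(s-t)\,Z_0(x) \,+\, \sinh(t)\,Z_s(x), \qquad x \in \mathbb{C}.$$

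Since $Z_t(x_k(t))=0$, evaluating at $x = x_k(t)$ collapses this to
$$\sinh(s-t)\,Z_0(x_k(t)) \;=\; -\sinh(t)\,Z_s(x_k(t)).$$
Now invoke the factorization of Lemma \ref{lem:fac} on both sides: with $y_j(\tau) := \sin^2 x_j(\tau)$, the right-hand factors rearrange into $\prod_j (y_j(s)-y_k(t))/y_j(s)$ and the left-hand factors into $\prod_j (y_j(0)-y_k(t))/y_j(0)$. Dividing through by $Z_0(0)\sinh(s-t) > 0$ yields \eqref{eq:y_kdt} with $K(t,s) = Z_s(0)\sinh(t)/[Z_0(0)\sinh(s-t)]$, and positivity of $K(t,s)$ is immediate from the positivity of $Z_0(0)$, $Z_s(0)$, $\sinh(t)$, $\sinh(s-t)$ for $0<t<s$.

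For odd $n$ the same argument works, except that the factorization of each $Z_\tau(x_k(t))$ carries an extra factor $\cos(x_k(t))$. Since $k \leq (n-1)/2$ forces $x_k(t) < \pi/2$ by Corollary \ref{cor:simple}, this factor is strictly positive and cancels from both sides, leaving the stated identity with the product index running over $j = 1,\ldots,(n-1)/2$. I do not expect a serious obstacle here; the only mildly delicate point is pinning down the precise coefficients $\sinh(s), \sinh(s-t), \sinh(t)$ in the three-term identity, but this is forced by the $e^{\pm t}$ structure and matches the standard addition formulas for $\sinh$.
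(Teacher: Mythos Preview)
Your proof is correct and follows essentially the same route as the paper's: both arrive at the relation $\sinh(s-t)\,Z_0(x_k(t)) = -\sinh(t)\,Z_s(x_k(t))$ and then apply Lemma~\ref{lem:fac}. The paper reaches this by applying Lemma~\ref{lem:Zdifft} twice (once with $\delta=t$, once with $\delta=t-s$) at $x=x_k(t)$ to eliminate the correlator term, while you package the same elimination as a single three-term identity coming from the $e^{\pm t}$ decomposition; you yourself note the equivalence, and indeed your identity $\sinh(s)Z_t=\sinh(s-t)Z_0+\sinh(t)Z_s$ is exactly what that elimination produces. One minor remark on the odd case: you invoke Corollary~\ref{cor:simple} to get $\cos(x_k(t))\neq 0$, which assumes connectedness of $G_{>0}$, an assumption not present in the lemma's hypotheses---but the paper's proof glosses over this point entirely, and the lemma is only ever applied under that assumption, so this is not a defect relative to the original.
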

\begin{proof}
	By setting $s=t-\delta$ and $x=x_k(t)$ in Lemma \ref{lem:Zdifft}, we have
	\begin{equation}
		Z_s(x_k(t))=-\sinh(t-s)\sum_{\sigma\in\{-1,+1\}^{V}}\sigma_{u_0}\sigma_{v_0}\exp\left[t\sigma_{u_0}\sigma_{v_0}+\sum_{uv\in E}J_{uv}\sigma_u\sigma_v+ix_k(t)\sum_{u\in V}\sigma_u\right].
	\end{equation}
Therefore,
\begin{equation}
	\frac{Z_0(x_k(t))}{\sinh(t)}=\frac{Z_s(x_k(t))}{\sinh(t-s)}, \forall t> 0, \forall s\geq0 \text{ and }s\neq t.
\end{equation}
This, combined with Lemma \ref{lem:fac} and \eqref{eq:yandx}, completes the proof of the lemma.
\end{proof}

We next derive a system of ordinary differential equations for $\{y_k(t)\}$.
\begin{lemma}\label{lem:yODE}
	Let $G=(V,E)$ be a finite graph and $\mathbf{J}$ be ferromagnetic pair interactions on $G$. Suppose that $G_{>0}=(V,E_{>0})$ defined by \eqref{eq:E+} is connected. If $n$ is even, then for each $k\in\{1,2,\dots,n/2\}$, we have
	\begin{equation}\label{eq:yODE}
		y^{\prime}_{k}(t)=-\frac{1}{\sinh(t)}\frac{y_k(t)}{\langle\sigma_{u_0}\sigma_{v_0}\rangle_{\mathbf{J}}\sinh(t)+\cosh(t)}\prod_{j=1}^{n/2}\frac{y_j(0)-y_k(t)}{y_j(0)}\prod_{j=1,j\neq k}^{n/2}\frac{y_j(t)}{y_j(t)-y_k(t)}, \forall t>0, 
	\end{equation}
where 
\begin{equation}
	\langle\sigma_{u_0}\sigma_{v_0}\rangle_{\mathbf{J}}:=\frac{\sum_{\sigma\in\{-1,+1\}^V}\sigma_{u_0}\sigma_{v_0}\exp\left[\sum_{uv\in E}J_{uv}\sigma_u\sigma_v\right]}{\sum_{\sigma\in\{-1,+1\}^V}\exp\left[\sum_{uv\in E}J_{uv}\sigma_u\sigma_v\right]}.
\end{equation}
If $n$ is odd, the last two displayed equations still hold for each $k\in\{1,2,\dots,(n-1)/2\}$ but one needs to replace $n$ by $n-1$ in \eqref{eq:yODE}.
\end{lemma}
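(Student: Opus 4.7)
The plan is to derive \eqref{eq:yODE} by letting $s \to t^+$ in the cross-time identity \eqref{eq:y_kdt} from Lemma~\ref{lem:ydifft}. Fix $t > 0$ and $k \in \{1,\ldots,n/2\}$ (the even $n$ case; the odd case is essentially identical). The left-hand side of \eqref{eq:y_kdt} does not depend on $s$, so I am free to take any limit in $s$ I like on the right. I would separate the $j=k$ factor on the right,
\[
-K(t,s) \cdot \frac{y_k(s) - y_k(t)}{y_k(s)} \prod_{j \neq k} \frac{y_j(s) - y_k(t)}{y_j(s)},
\]
so that the vanishing factor $y_k(s)-y_k(t)$ is paired against the singular prefactor $K(t,s) \propto 1/\sinh(s-t)$, yielding a concrete indeterminate form.

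For the limit, Proposition~\ref{prop:simple} together with the analytic implicit function theorem gives that each $y_k$ is differentiable for $t>0$, hence $(y_k(s)-y_k(t))/\sinh(s-t) \to y'_k(t)$. Corollary~\ref{cor:simple} ensures $0 < y_1(t) < \cdots < y_{n/2}(t)$, so each factor with $j \neq k$ converges to a finite, nonzero limit, and $1/y_k(s) \to 1/y_k(t)$. Since $Z_s(0) \to Z_t(0)$ as well, the full limit of the right-hand side is
\[
-\frac{Z_t(0)\sinh(t)}{Z_0(0)} \cdot \frac{y'_k(t)}{y_k(t)} \prod_{j \neq k} \frac{y_j(t)-y_k(t)}{y_j(t)}.
\]
Equating this to the left-hand side of \eqref{eq:y_kdt} and solving for $y'_k(t)$ produces the desired expression, up to identifying $Z_0(0)/Z_t(0)$ with $[\cosh(t)+\langle\sigma_{u_0}\sigma_{v_0}\rangle_{\mathbf{J}}\sinh(t)]^{-1}$.

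To identify this ratio I would use the fact that $\sigma_{u_0}\sigma_{v_0} \in \{\pm 1\}$ to write $e^{t\sigma_{u_0}\sigma_{v_0}} = \cosh(t) + \sigma_{u_0}\sigma_{v_0}\sinh(t)$; multiplying by $e^{\sum_{uv\in E} J_{uv}\sigma_u\sigma_v}$ and summing over $\sigma$ gives $Z_t(0) = Z_0(0)[\cosh(t) + \langle\sigma_{u_0}\sigma_{v_0}\rangle_{\mathbf{J}}\sinh(t)]$, which supplies the missing factor. For odd $n$, the extra $\cos(x)$ in Lemma~\ref{lem:fac} evaluated at $x_k(t)$ produces a common factor $\cos^2 x_k(t)$ on both sides of the analogue of \eqref{eq:y_kdt}, so it cancels and the same derivation yields the ODE with $n$ replaced by $n-1$ in the product over $j$. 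I do not anticipate a serious obstacle: the only delicate point is taking the limit $s \to t^+$ cleanly, and this is exactly what the simplicity of the zeros (Proposition~\ref{prop:simple}) ensures — without it, multiple factors in the $s$-dependent product could vanish simultaneously and the cancellation against $1/\sinh(s-t)$ would break down.
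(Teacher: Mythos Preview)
Your approach is correct and is a genuinely different route from the paper's. The paper obtains \eqref{eq:yODE} by implicit differentiation of $Z_t(x_k(t))\equiv 0$: it computes $Z_t'(x_k(t))$ directly from the factorization in Lemma~\ref{lem:fac}, expresses $\partial_t Z_t(x_k(t))$ via Lemma~\ref{lem:Zdifft} (with $\delta=t$) as $Z_0(x_k(t))/\sinh(t)$, factors $Z_0(x_k(t))$ again with Lemma~\ref{lem:fac}, and then converts the resulting ODE for $x_k$ to one for $y_k=\sin^2 x_k$. You instead bypass the explicit computation of $Z_t'(x_k(t))$ entirely by sending $s\to t^+$ in the already-established identity \eqref{eq:y_kdt}, letting the indeterminate form $\frac{y_k(s)-y_k(t)}{\sinh(s-t)}$ produce $y_k'(t)$. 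Both routes rest on exactly the same ingredients (simplicity of zeros for differentiability, the factorization, and the relation $Z_t(0)/Z_0(0)=\cosh t+\langle\sigma_{u_0}\sigma_{v_0}\rangle_{\mathbf J}\sinh t$), but yours reuses Lemma~\ref{lem:ydifft} and avoids any calculus on the product formula, while the paper's is the more classical implicit-differentiation derivation. One small remark: for odd $n$ you do not need to discuss the $\cos^2 x_k(t)$ cancellation at all, since Lemma~\ref{lem:ydifft} already states the odd-$n$ identity with the cosine removed; you can simply invoke that version and run the identical limit argument.
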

\begin{proof}
	We only consider the case that $n$ is even since the proof for the odd case is similar.
	By Proposition \ref{prop:simple}, $x_k(t)$ is a simple zero of $Z_t(x)$ for any $t\geq0$. So 
	\begin{equation}
		Z^{\prime}_t(x_k(t)):=\left.\frac{\partial Z_t(x)}{\partial x}\right|_{x=x_k(t)}\neq 0, \forall t\geq0.
	\end{equation}
The implicit function theorem then implies that $x_k(t)$ is continuously differentiable in $t>0$. So we may differentiate the following equation with respect to $t$:
\begin{equation}
	Z_t(x_k(t))=\sum_{\sigma\in\{-1,+1\}^{V}}\exp\left[t\sigma_{u_0}\sigma_{v_0}+\sum_{uv\in E}J_{uv}\sigma_u\sigma_v+ix_k(t)\sum_{u\in V}\sigma_u\right]\equiv0.
\end{equation}
As a result, we get
\begin{equation}\label{eq:Zdiffx}
	x^{\prime}_k(t)Z_t^{\prime}(x_k(t))=-\sum_{\sigma\in\{-1,+1\}^{V}}\sigma_{u_0}\sigma_{v_0}\exp\left[t\sigma_{u_0}\sigma_{v_0}+\sum_{uv\in E}J_{uv}\sigma_u\sigma_v+ix_k(t)\sum_{u\in V}\sigma_u\right].
\end{equation}
An easy calculation using Lemma \ref{lem:fac} gives
\begin{equation}\label{eq:Zdiffx1}
	Z_t^{\prime}(x_k(t))=-2Z_t(0)\cot(x_k(t))\prod_{j=1,j\neq k}^{n/2}\frac{\sin^2x_j(t)-\sin^2x_k(t)}{\sin^2x_j(t)}.
\end{equation}
Lemma \ref{lem:Zdifft} with $\delta=t$ and $x=x_k(t)$ and Lemma \ref{lem:fac} give
\begin{align}\label{eq:Zdiffx2}
	&-\sum_{\sigma\in\{-1,+1\}^{V}}\sigma_{u_0}\sigma_{v_0}\exp\left[t\sigma_{u_0}\sigma_{v_0}+\sum_{uv\in E}J_{uv}\sigma_u\sigma_v+ix_k(t)\sum_{u\in V}\sigma_u\right]=\frac{Z_0(x_k(t))}{\sinh(t)}\nonumber\\
	&\qquad=\frac{Z_0(0)}{\sinh(t)}\prod_{j=1}^{n/2}\frac{\sin^2x_j(0)-\sin^2x_k(t)}{\sin^2x_j(0)}.
\end{align}
Combining \eqref{eq:Zdiffx}, \eqref{eq:Zdiffx1} and \eqref{eq:Zdiffx2}, we get
\begin{equation}\label{eq:x_kdiff}
	x_k^{\prime}(t)=\frac{Z_0(0)\tan x_k(t)}{-2Z_t(0)\sinh(t)}\prod_{j=1}^{n/2}\frac{\sin^2x_j(0)-\sin^2x_k(t)}{\sin^2x_j(0)}\prod_{j=1,j\neq k}^{n/2}\frac{\sin^2x_j(t)}{\sin^2x_j(t)-\sin^2x_k(t)}, \forall t>0.
\end{equation}
Note that
\begin{align}
	\frac{Z_t(0)}{Z_0(0)}&=\frac{\sum_{\sigma\in\{-1,+1\}^V}\exp\left[t\sigma_{u_0}\sigma_{v_0}+\sum_{uv\in E}J_{uv}\sigma_u\sigma_v\right]}{\sum_{\sigma\in\{-1,+1\}^V}\exp\left[\sum_{uv\in E}J_{uv}\sigma_u\sigma_v\right]}=\left\langle\exp\left[t\sigma_{u_0}\sigma_{v_0}\right]\right\rangle_{\mathbf{J}}\\
	&=\left\langle\sum_{k=0}^{\infty}\frac{\sigma_{u_0}\sigma_{v_0}t^{2k+1}}{(2k+1)!}+\sum_{k=0}^{\infty}\frac{t^{2k}}{(2k)!}\right\rangle_{\mathbf{J}}=\langle\sigma_{u_0}\sigma_{v_0}\rangle_{\mathbf{J}}\sinh(t)+\cosh(t).
\end{align}
This, \eqref{eq:yandx} and \eqref{eq:x_kdiff} complete the proof of the lemma.
\end{proof}

We derive the locations of $y_k(t)$ in the next proposition.
\begin{proposition}\label{prop:y_kloc}
	Let $G=(V,E)$ be a finite graph and $\mathbf{J}$ be ferromagnetic pair interactions on $G$. Suppose that $G_{>0}=(V,E_{>0})$ defined by \eqref{eq:E+} is connected. If $n$ is even, then for each fixed $s\in(0,\infty)$ and $k\in\{1,2,\dots,n/2\}$, exactly one of the following three cases occurs
	\begin{enumerate}[a)]
		\item $y_k(t)\equiv y_k(0)$ for each $t\in[0,s]$,
		\item $y_k(s)\in(y_{k-1}(0),y_k(0))$, $y_k(t)\in(y_k(s),y_k(0))$ for each $t\in(0,s)$,
		\item $y_k(s)\in(y_{k}(0),y_{k+1}(0))$, $y_k(t)\in(y_k(0),y_k(s))$ for each $t\in(0,s)$,
	\end{enumerate}
where we set $y_0(0):=0$ and $y_{n/2+1}(0):=1$. If $n$ is odd, the above statement still holds for each $k\in\{1,2,\dots,(n-1)/2\}$.
\end{proposition}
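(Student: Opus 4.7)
The plan is to reduce the proposition to a statement about the real zeros of an explicit polynomial pencil. Rearranging Lemma~\ref{lem:ydifft}, every $y_k(t)$ is a root of
\[F_t(y):=\prod_{j=1}^{n/2}\bigl(y_j(s)-y\bigr)+c(t)\prod_{j=1}^{n/2}\bigl(y_j(0)-y\bigr),\qquad c(t):=\frac{\prod_j y_j(s)}{K(t,s)\prod_j y_j(0)}>0,\]
for $t\in(0,s)$. Since $F_t$ has degree $n/2$ and Corollary~\ref{cor:simple} furnishes $n/2$ distinct simple values $y_1(t)<\cdots<y_{n/2}(t)$, these are precisely all the roots of $F_t$.

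I would then localize each root. Write $a_j:=y_j(0)$, $b_j:=y_j(s)$, and let $c_1<c_2<\cdots<c_N$ be the increasing enumeration of the distinct values in $\{a_j\}\cup\{b_j\}$. On each interval $(c_l,c_{l+1})$ the two summands of $F_t$ have signs determined by the parities of the numbers of $a$'s and $b$'s below $y$; using $c(t)>0$ and the fact that this combined count equals $l$, the two summands have opposite signs exactly when $l$ is odd. So $F_t$ cannot vanish on any even-indexed interval, and all $n/2$ roots must lie in the $n/2$ odd-indexed intervals $(c_{2k-1},c_{2k})$. Combining the endpoint evaluations $F_t(c_l)$ (each reducing to $\prod_j(b_j-c_l)$ or $c(t)\prod_j(a_j-c_l)$ according to whether $c_l$ is an $a$ or a $b$) with the fixed total root count, exactly one sign change occurs per odd-indexed interval, and the ordering $y_1(t)<\cdots<y_{n/2}(t)$ yields
\[y_k(t)\in(c_{2k-1},c_{2k})\quad\text{for each }t\in(0,s)\text{ and each }k.\]
Continuity (Lemma~\ref{lem:cont}) as $t\to 0^+$ and $t\to s^-$ then forces $a_k,b_k\in[c_{2k-1},c_{2k}]$; since $a_k,b_k$ are themselves among the $c_l$'s, we deduce $\{a_k,b_k\}\subseteq\{c_{2k-1},c_{2k}\}$.

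The three cases of the proposition now correspond to $a_k=b_k$, $a_k<b_k$, and $a_k>b_k$. If $a_k=b_k$, then $F_t(y)$ factors as $(a_k-y)\bigl[\prod_{j\ne k}(b_j-y)+c(t)\prod_{j\ne k}(a_j-y)\bigr]$, so $y=a_k$ is a simple root of $F_t$ for every $t\in(0,s)$; combined with $y_k(0)=a_k$ and continuity, this forces $y_k(t)\equiv a_k$ on $[0,s]$, which is case~(a). If $a_k<b_k$, then $\{a_k,b_k\}=\{c_{2k-1},c_{2k}\}$ gives $y_k(t)\in(a_k,b_k)=(y_k(0),y_k(s))$, and applying the same localization at index $k+1$ yields $a_{k+1}\ge c_{2k+1}>c_{2k}=b_k$, so $y_k(s)\in(y_k(0),y_{k+1}(0))$, which is case~(c); the case $a_k>b_k$ is symmetric and gives case~(b). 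The odd-$n$ statement follows by the same argument with products indexed by $j=1,\dots,(n-1)/2$.

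The part I expect to be most delicate is verifying in the second step that each odd-indexed interval contributes exactly one sign change of $F_t$, rather than three or more. This uses the fixed count of $n/2$ simple real roots from Corollary~\ref{cor:simple} together with the boundary signs $F_t(0)=\prod_j b_j+c(t)\prod_j a_j>0$ and $F_t(1)=(-1)^{n/2}\bigl(\prod_j(1-b_j)+c(t)\prod_j(1-a_j)\bigr)$. A secondary subtlety is the possibility $a_j=b_{j'}$ with $j\ne j'$, which collapses $N$ below $n$; I would rule this out by a continuity-and-ordering argument, noting that if $a_j=b_{j'}$ with $j\ne j'$, then $y=a_j$ would be a permanent root of $F_t$ and, combined with the distinct-and-ordered property of the $y_k(t)$'s under continuous variation in $t$, two different $y_k$'s would have to coincide at some intermediate $t$, contradicting Corollary~\ref{cor:simple}.
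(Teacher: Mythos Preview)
Your polynomial-pencil framing is a genuinely different route from the paper's. The paper argues by induction on $k$: first showing $y_k(s)\in[y_{k-1}(0),y_{k+1}(0)]$ via sign contradictions derived pointwise from Lemma~\ref{lem:ydifft}, then excluding the endpoints by letting $t\downarrow0$ in that relation, and finally pinning down the trajectory on $(0,s)$ by further sign analysis case by case. Your approach is more global and structural, and once the key step is secured it is arguably cleaner.

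However, the step you flag as delicate is a genuine gap as written. The assertion that each odd-indexed interval $(c_{2k-1},c_{2k})$ carries exactly one root of $F_t$ does \emph{not} follow from the total root count together with the signs of $F_t(0)$ and $F_t(1)$. A short computation shows that $F_t(c_{2k-1})$ and $F_t(c_{2k})$ have opposite signs precisely when $c_{2k-1}$ and $c_{2k}$ come from different sets (one an $a$, one a $b$), and this is exactly the conclusion $\{a_k,b_k\}=\{c_{2k-1},c_{2k}\}$ you are trying to reach; so the argument is circular at this point. For instance, in the arrangement $a_1<a_2<b_1<b_2$ every $F_t(c_l)$ is positive and neither odd interval carries a guaranteed sign change; this configuration is impossible in our setting, but your endpoint argument does not see why.

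The clean fix is a no-crossing argument that replaces the sign-parity step entirely. Assuming no coincidences between the $a$'s and $b$'s (which you correctly isolate), $F_t(c_l)\neq0$ for every $l$ and every $t\in(0,s)$, since plugging in any $c_l$ kills exactly one of the two products. Hence each continuous path $t\mapsto y_k(t)$ on $(0,s)$ avoids all the $c_l$ and therefore stays in a single open interval $I_k$ between consecutive $c$'s. Continuity at the endpoints gives $a_k=y_k(0)$ and $b_k=y_k(s)$ in $\overline{I_k}$, so $\{a_k,b_k\}$ are the two endpoints of $I_k$. The intervals $I_1<I_2<\cdots<I_{n/2}$ are disjoint and their endpoints exhaust $\{c_1,\dots,c_n\}$, forcing $I_k=(c_{2k-1},c_{2k})$; your trichotomy then goes through verbatim. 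Your treatment of the coincidence $a_j=b_{j'}$ with $j\neq j'$ is essentially right but should be stated as: since $a_j$ is a root of $F_t$ for every $t\in(0,s)$ and the roots are simple and vary continuously, the index $m$ with $y_m(t)=a_j$ is locally constant, hence constant on $(0,s)$; letting $t\to0^+$ gives $m=j$, while $t\to s^-$ gives $m=j'$, a contradiction.
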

\begin{proof}
	We only consider the case where $n$ is even since the proof for the odd case is similar.
	Note that Lemma \ref{lem:cont} and \eqref{eq:yandx} imply that $y_k(t)$ is continuous in $t\in[0,\infty)$ for any $k\in\{1,2,\dots,n/2\}$, and Corollary \ref{cor:simple} and \eqref{eq:yandx} imply that
	\begin{equation}\label{eq:ysimple}
		0<y_1(t)<y_2(t)<\dots<y_{n/2}(t)<1, \forall t\in[0,\infty).
	\end{equation}
	We first prove the following claim.
\begin{claim}\label{clm:ykint}
	\begin{equation}
		y_k(s)\in[y_{k-1}(0),y_{k+1}(0)], \forall k\in\{1,2,\dots,n/2\}.
	\end{equation}
\end{claim}
\begin{proof}
We prove this claim by induction. From \eqref{eq:ysimple}, we know $y_1(s)>0=y_0(0)$. Suppose that $y_1(s)>y_2(0)$. Then by the continuity of $y_1(t)$, there exists $T\in(0,s)$ such that $y_1(T)\in(y_2(0),y_1(s))$ and $y_1(T)\in(y_2(0),y_3(0))$. Lemma \ref{lem:ydifft} gives that
\begin{equation}
	\prod_{j=1}^{n/2}\frac{y_j(0)-y_1(T)}{y_j(0)}=-K(T,s)\prod_{j=1}^{n/2}\frac{y_j(s)-y_1(T)}{y_j(s)}, \text{ with }K(T,s)>0.
\end{equation}
It is clear that the LHS of the above displayed equation is positive, while the RHS is negative. This contradiction implies that $y_1(s)\leq y_2(0)$. Now suppose $y_j(s)\in[y_{j-1}(0),y_{j+1}(0)]$ for each $j\in\{1,2,\dots,k\}$ with $k<n/2$. We will prove $y_{k+1}(s)\in[y_k(0),y_{k+2}(0)]$ by contradiction. Suppose that $y_{k+1}(s)<y_k(0)$. By the continuity of $y_{k+1}(t)$, there exists $T\in(0,s)$ such that $y_{k+1}(T)\in(y_{k+1}(s),y_k(0))$ and $y_{k+1}(T)\in (y_{k+1}(s),y_{k+2}(s))$. By the induction hypothesis, $y_k(s)\geq y_{k-1}(0)$, so
\begin{equation}
	y_{k+1}(T)\in(y_{k+1}(s),y_k(0))\subset(y_k(s),y_k(0))\subseteq (y_{k-1}(0),y_k(0)).
\end{equation}
Lemma \ref{lem:ydifft} gives that
\begin{equation}\label{eq:yk+1T}
	\prod_{j=1}^{n/2}\frac{y_j(0)-y_{k+1}(T)}{y_j(0)}=-K(T,s)\prod_{j=1}^{n/2}\frac{y_j(s)-y_{k+1}(T)}{y_j(s)}, \text{ with }K(T,s)>0.
\end{equation}
The LHS of the last displayed equation has the sign $(-1)^{k-1}$, while the the RHS has the sign $(-1)(-1)^{k+1}$. This contradiction implies that $y_{k+1}(s)\geq y_k(0)$. Next suppose that $y_{k+1}(s)>y_{k+2}(0)$. Then we may assume $k\leq n/2-2$ because of \eqref{eq:ysimple}. By the continuity of $y_{k+1}(t)$, there exists $T\in(0,s)$ such that $y_{k+1}(T)\in(y_{k+2}(0),y_{k+1}(s))$ and $y_{k+1}(T)\in(y_{k+2}(0),y_{k+3}(0))$. By the induction hypothesis, $y_k(s)\leq y_{k+1}(0)$, so
\begin{equation}
	y_{k+1}(T)\in(y_{k+2}(0),y_{k+1}(s))\subset (y_{k+1}(0),y_{k+1}(s))\subseteq (y_k(s),y_{k+1}(s)).
\end{equation}
This time the LHS of \eqref{eq:yk+1T} has the sign $(-1)^{k+2}$, while the RHS has the sign $(-1)(-1)^k$. This contradiction implies that $y_{k+1}(s)\leq y_{k+2}(0)$. This completes the proof of the claim. 
\end{proof}
We next prove the following claim.
\begin{claim}\label{clm:ykend}
\begin{equation}
	y_k(s)\notin\{y_{k-1}(0),y_{k+1}(0)\}, \forall k\in\{1,2,\dots,n/2\}.
\end{equation}
\end{claim}
\begin{proof}
Suppose that $y_k(s)=y_{k-1}(0)$. Then Lemma \ref{lem:ydifft} implies that
\begin{equation}\label{eq:yk-1t}
	\prod_{j=1,j\neq k-1}^{n/2}\frac{y_j(0)-y_{k-1}(t)}{y_j(0)}=-K(t,s)\prod_{j=1,j\neq k}^{n/2}\frac{y_j(s)-y_{k-1}(t)}{y_j(s)}, \forall 0<t<s.
\end{equation}
Note that 
\begin{equation}
	\lim_{t\downarrow 0} K(t,s)=\lim_{t\downarrow 0}\frac{Z_s(0)\sinh(t)}{Z_0(0)\sinh(s-t)}=0.
\end{equation}
So by letting $t\downarrow0$ in \eqref{eq:yk-1t}, we get that the RHS of \eqref{eq:yk-1t} is 0, and thus from the LHS we have
\begin{equation}
	y_{k-1}(0)=\lim_{t\downarrow 0}y_{k-1}(t)\in\{y_1(0),y_2(0),\dots,y_{k-2}(0),y_{k}(0),\dots,y_{n/2}(0)\},
\end{equation}
which contradicts \eqref{eq:ysimple}. Suppose that $y_k(s)=y_{k+1}(0)$. Then Lemma \ref{lem:ydifft} implies that
\begin{equation}\label{eq:yk+1t}
	\prod_{j=1,j\neq k+1}^{n/2}\frac{y_j(0)-y_{k+1}(t)}{y_j(0)}=-K(t,s)\prod_{j=1,j\neq k}^{n/2}\frac{y_j(s)-y_{k+1}(t)}{y_j(s)}, \forall 0<t<s.
\end{equation}
By letting $t\downarrow0$ in \eqref{eq:yk+1t}, we get
\begin{equation}
	y_{k+1}(0)=\lim_{t\downarrow 0}y_{k+1}(t)\in\{y_1(0),y_2(0),\dots,y_{k}(0),y_{k+2}(0),\dots,y_{n/2}(0)\},
\end{equation}
which again contradicts \eqref{eq:ysimple}. 
\end{proof}
Combining Claims \ref{clm:ykint} and \ref{clm:ykend}, we have
	\begin{equation}
	y_k(s)\in(y_{k-1}(0),y_{k+1}(0)), \forall k\in\{1,2,\dots,n/2\}.
\end{equation}
We now divide the possible range of $y_k(s)$ into the following three disjoint subsets.
\begin{enumerate} [\text{Case} 1:]
	\item $y_k(s)=y_k(0)$. Suppose that there exists $t\in(0,s)$ such that $y_k(t)>y_k(0)$. Then by the continuity of $y_k(t)$, one can find $T\in(0,s)$ such that $y_k(T)\in(y_k(0),y_{k+1}(0))$ and $y_k(T)\in(y_k(s),y_{k+1}(s))$. Lemma \ref{lem:ydifft} gives that
	\begin{equation}\label{eq:ykT}
		\prod_{j=1}^{n/2}\frac{y_j(0)-y_k(T)}{y_j(0)}=-K(T,s)\prod_{j=1}^{n/2}\frac{y_j(s)-y_k(T)}{y_j(s)}, \text{ with }K(T,s)>0.
	\end{equation}
The LHS of \eqref{eq:ykT} has the sign $(-1)^k$, and the RHS of \eqref{eq:ykT} has the sign $(-1)(-1)^k$, which is a contradiction. Similarly, one can prove that $y_k(t)$ cannot be less than $y_k(0)$. Therefore, $y_k(t)=y_k(0)$ for each $t\in[0,s]$.
	\item $y_k(s)\in(y_{k-1}(0),y_k(0))$. Suppose that there exists $t\in(0,s)$ such that $y_k(t)>y_k(0)$. Then by the continuity of $y_k(t)$, one can find $T\in(0,s)$ such that $y_k(T)\in(y_k(0),y_{k+1}(0))$ and $y_k(T)\in(y_k(s),y_{k+1}(s))$. The same argument as in Case~1 would lead to a contradiction. A similar argument shows that $y_k(t)$ cannot be less than $y_k(s)$ for each $t\in(0,s)$. Therefore, $y_k(t)\in(y_k(s),y_k(0))$ for each $t\in(0,s)$.
	\item $y_k(s)\in(y_k(0),y_{k+1}(0))$. Similar arguments as in Case 2 show that $y_k(t)\in(y_k(0),y_k(s))$ for each $t\in(0,s)$.
\end{enumerate}
This completes the proof of the proposition.
\end{proof}
We are ready to prove Theorem \ref{thm:t}.
\begin{proof}[Proof of Theorem \ref{thm:t}]
	We only consider the case that $n$ is even since the proof for the odd case is similar.
	First, \eqref{eq:xsimple} follows from Corollary \ref{cor:simple}. Proposition \ref{prop:y_kloc} and the continuity of $y_k(s)$ in $s\in[0,\infty)$ imply that exactly one of the following three cases occurs:
	\begin{enumerate}[a)]
		\item $y_k(t)\equiv y_k(0)$ for each $t\in[0,\infty)$,
		\item $y_k(t)\in(y_{k-1}(0),y_k(0))$ for each $t\in(0,\infty)$,
		\item $y_k(t)\in(y_{k}(0),y_{k+1}(0))$ for each $t\in(0,\infty)$.
	\end{enumerate}
By Lemma \ref{lem:yODE}, we have
\begin{equation}\label{eq:ykprime}
	y^{\prime}_{k}(t)=-\frac{1}{\sinh(t)}\frac{y_k(t)}{\langle\sigma_{u_0}\sigma_{v_0}\rangle_{\mathbf{J}}\sinh(t)+\cosh(t)}\prod_{j=1}^{n/2}\frac{y_j(0)-y_k(t)}{y_j(0)}\prod_{j=1,j\neq k}^{n/2}\frac{y_j(t)}{y_j(t)-y_k(t)}, \forall t>0.
\end{equation}
Obviously, we have
\begin{equation}
	|\langle\sigma_{u_0}\sigma_{v_0}\rangle_{\mathbf{J}}|\leq 1.
\end{equation}
Note that Corollary \ref{cor:simple} and \eqref{eq:yandx} imply that
\begin{equation}
	0<y_1(t)<y_2(t)<\dots<y_{n/2}(t)<1, \forall t\in[0,\infty).
\end{equation}
Thus, we have
\begin{equation}
	\frac{1}{\sinh(t)}\frac{y_k(t)}{\langle\sigma_{u_0}\sigma_{v_0}\rangle_{\mathbf{J}}\sinh(t)+\cosh(t)}>0, \forall t>0.
\end{equation}
If $y_k(t)\in(y_{k-1}(0),y_k(0))$ for each $t\in(0,\infty)$, then
\begin{equation}
	\text{the sign of }\prod_{j=1}^{n/2}\frac{y_j(0)-y_k(t)}{y_j(0)}=(-1)^{k-1}, \text{the sign of }\prod_{j=1,j\neq k}^{n/2}\frac{y_j(t)}{y_j(t)-y_k(t)}=(-1)^{k-1}.
\end{equation}
Therefore, \eqref{eq:ykprime} implies that $y_k^{\prime}(t)<0$ for each $t\in(0,\infty)$. 

Similarly, if $y_k(t)\in(y_{k}(0),y_{k+1}(0))$ for each $t\in(0,\infty)$, then one can prove  that $y_k^{\prime}(t)>0$ for each $t\in(0,\infty)$. These combined with \eqref{eq:yandx} complete the proof of the theorem.
\end{proof}

\section{Monotonicity on the complete graph}\label{sec:K_n}
In this section, we prove Theorem \ref{thm:K_n}. Since we always fix $n\in\mathbb{N}$ throughout this section, to simplify the notation, we may drop $n$ from our notation and write
\begin{equation}\label{eq:Z_tdef1}
	Z_t(x):=Z_{n,t}(x)=\sum_{\sigma\in\{-1,+1\}^{n}}\exp\left[t\left(\sum_{j=1}^n\sigma_j\right)^2+ix\sum_{j=1}^n\sigma_j\right].
\end{equation}
For fixed $t\geq 0$, it is well-known that $Z_t(x)$ is in the Laguerre-P\'{o}lya class (see, e.g., \cite{CSV94} for the definition). Actually, we have
\begin{lemma}\label{lem:LP}
	For any $t\geq 0$,  $Z_t(x)$ is in the Laguerre-P\'{o}lya class and
	\begin{equation}
	Z_t(x)=Z_t(0)\prod_{j=1}^{\infty}\left(1-\frac{x^2}{x_j^2(t)}\right)=
	\begin{cases}
		Z_t(0)\prod_{j=1}^{n/2}\frac{\sin^2x_j(t)-\sin^2x}{\sin^2x_j(t)}, &\text{if } n \text{ even}\\
		Z_t(0)\cos(x)\prod_{j=1}^{(n-1)/2}\frac{\sin^2x_j(t)-\sin^2x}{\sin^2x_j(t)}, &\text{if } n \text{ odd},
	\end{cases}
\end{equation}
where $0<x_1(t)\leq x_2(t)\leq\dots$ are all positive zeros of $Z_t(x)$ (listed according to their multiplicities), and $\sum_{j=1}^{\infty}x_j^{-2}(t)<\infty$.
\end{lemma}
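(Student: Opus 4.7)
The factorization formulas reduce directly to Lemma~\ref{lem:fac}. First I would use the identity
$$\Bigl(\sum_{j=1}^n\sigma_j\Bigr)^{\!2} \;=\; n \,+\, 2\!\sum_{1\le i<j\le n}\!\sigma_i\sigma_j$$
to rewrite $Z_{n,t}(x) = e^{tn}Z_{K_n,\mathbf{J}}(ix)$, where $\mathbf{J}_{uv}\equiv 2t$ is the ferromagnetic pair interaction on every edge of the complete graph $K_n$. Applying Lemma~\ref{lem:fac} to this Ising model---with the ``extra edge'' parameter of that lemma set to $0$ and the above uniform $\mathbf{J}$---then yields, up to the constant prefactor $e^{tn}$ (which cancels between $Z_{n,t}(x)$ and $Z_{n,t}(0)$), both of the displayed product formulas as well as the summability $\sum_j x_j^{-2}(t)<\infty$.

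Next I would obtain the Laguerre-P\'olya class membership by appealing to the canonical characterization: a real entire function $f$ belongs to the LP class iff it admits a representation
$$f(z) \;=\; c\,z^{m}\,e^{-\alpha z^{2}+\beta z}\prod_{k}\Bigl(1-\tfrac{z}{z_{k}}\Bigr)e^{z/z_{k}}$$
with $c,\beta\in\mathbb{R}$, $\alpha\ge 0$, $m\in\mathbb{N}\cup\{0\}$, $z_k\in\mathbb{R}\setminus\{0\}$ and $\sum_k z_k^{-2}<\infty$. Since the zeros of $Z_t(x)$ come in symmetric pairs $\pm x_j(t)$ with $x_j(t)>0$ real, pairing the corresponding Weierstrass factors causes the regularizers $e^{\pm z/x_j(t)}$ to cancel and produces exactly the infinite product $\prod_j(1-x^2/x_j^2(t))$ from the first displayed equality. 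The residual exponential prefactor $e^{-\alpha z^{2}+\beta z}$ must then be trivial: $\beta=0$ because $Z_t$ is even, $m=0$ because $Z_t(0)>0$, and $\alpha=0$ because $Z_{n,t}$ is a finite linear combination of the exponentials $e^{i(2k-n)x}$ for $k=0,1,\dots,n$, hence entire of order at most $1$, while $e^{-\alpha z^{2}}$ with $\alpha>0$ would force order $2$. Thus $c=Z_t(0)$ and the canonical LP form is exhibited.

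The substantive analytic input---the Hadamard factorization for an even entire function of order one with only real zeros---has already been carried out in Lemma~\ref{lem:fac}; the LP-class statement is then simply a matter of matching the factored form against the canonical representation. I do not anticipate any substantive obstacle, only the bookkeeping of keeping track of the $e^{tn}$ prefactor when translating between the Curie--Weiss convention $t(\sum_j\sigma_j)^2$ and the standard Ising convention $\sum_{i<j}J_{ij}\sigma_i\sigma_j$.
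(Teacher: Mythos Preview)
Your proposal is correct and takes essentially the same approach as the paper, which simply says ``The proof is the same as that of Lemma~\ref{lem:fac}.'' Your reduction of the Curie--Weiss partition function to a special case of the general Ising partition function via $(\sum_j\sigma_j)^2=n+2\sum_{i<j}\sigma_i\sigma_j$ is a clean way to invoke Lemma~\ref{lem:fac} directly rather than repeating its proof, and your explicit matching against the canonical Laguerre--P\'olya form is in fact more detailed than what the paper provides.
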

\begin{proof}
	The proof is the same as that of Lemma \ref{lem:fac}.
\end{proof}

It is easy to check that each zero of $Z_0(x)$ has multiplicity $n$, but we will see that all zeros of $Z_t(x)$ are simple once $t>0$.
\begin{lemma}\label{lem:Zsimple}
	If $t>0$, then all zeros of $Z_t(x)$ (as a function of $x$ with $t$ fixed) are simple.
\end{lemma}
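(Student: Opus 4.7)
The plan is to reduce Lemma \ref{lem:Zsimple} directly to Proposition \ref{prop:simple} by interpreting the Curie-Weiss partition function as an Ising partition function on the complete graph $K_n$ with uniform ferromagnetic pair interactions. The key observation is the polarization identity $(\sum_j\sigma_j)^2 = n + 2\sum_{uv\in E(K_n)}\sigma_u\sigma_v$, which recasts the quadratic form in the exponent of \eqref{eq:Z_tdef1} as a sum of pair interactions on the edges of $K_n$.

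First I would apply this identity to write
\[
Z_t(x)\,=\,e^{tn}\sum_{\sigma\in\{-1,+1\}^n}\exp\Bigl[\,\sum_{uv\in E(K_n)}2t\,\sigma_u\sigma_v\,+\,ix\sum_{u}\sigma_u\,\Bigr].
\]
The sum on the right-hand side matches the Section~\ref{sec:t} partition function $Z_{G,s}(x)$ from \eqref{eq:Z_tdef}, where one takes $G=K_n$, the fixed couplings $J_{uv}=2t$ for every edge, any distinguished edge $u_0v_0\in E(K_n)$, and the extra-coupling parameter $s=0$. Since multiplication by the positive constant $e^{tn}$ preserves zeros and their multiplicities, it suffices to show that $Z_{G,0}(x)$ has only simple zeros.

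Next I would invoke Proposition \ref{prop:simple}. Its hypothesis requires $G_{>0}$ to be connected; for $t>0$, every edge of $K_n$ carries the strictly positive coupling $J_{uv}=2t$, so $E_{>0}=E(K_n)$ and $G_{>0}=K_n$, which is obviously connected. The Proposition (applied at parameter $s=0\in[0,\infty)$) then yields that all zeros of $Z_{G,0}(x)$ are simple, and the lemma follows.

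This reduction involves no genuine obstacle; Proposition \ref{prop:simple} does all the real work. The one place demanding care is the factor-of-two bookkeeping in matching $t(\sum_j\sigma_j)^2$ to $\sum_{uv}J_{uv}\sigma_u\sigma_v$, which accounts for the choice $J_{uv}=2t$ (rather than $t$) and for the extraneous scalar $e^{tn}$ coming from the $+n$ term in the polarization identity.
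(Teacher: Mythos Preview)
Your reduction is correct and is precisely the route the paper itself acknowledges in the first sentence of its proof (``Of course, the lemma follows from Proposition~\ref{prop:simple}''). The paper then chooses to present an \emph{alternative} argument: it uses that $Z_t$ satisfies the backward heat equation $\partial_t Z_t=-\partial_x^2 Z_t$ and shows that a multiple zero at some $t_0>0$ would force a Laguerre inequality $L_1(Z_{t_0-\delta}^{(k-1)}(x_0))<0$ for small $\delta>0$, contradicting the fact that $Z_{t_0-\delta}$ (and all its derivatives) lie in the Laguerre--P\'olya class. Your route is shorter but relies on the Nishimori--Griffiths result cited for Proposition~\ref{prop:simple}; the paper's alternative is self-contained within the Laguerre--P\'olya/heat-flow framework and highlights the special analytic structure of the Curie--Weiss partition function, at the cost of being considerably longer.
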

\begin{proof}
	Of course, the lemma follows from Proposition \ref{prop:simple}. We give a different proof which utilizes the fact that $Z_t(x)$ satisfies the backward heat equation \eqref{eq:bhe} below. We follow the idea of the proof of Lemma 2.2 in \cite{CSV94}: if $x_0$ were a zero of $Z_{t_0}$ of multiplicity at least $2$, then $Z_{t_0-\delta}$ would not be in the Laguerre-P\'{o}lya class for all small $\delta$. It is clear that $Z_t(x)$ is analytic in $(t,x)$. It is also easy to check from \eqref{eq:Z_tdef1} that $Z_t(x)$ satisfies the backward heat equation:
	\begin{equation}\label{eq:bhe}
		\frac{\partial Z_t(x)}{\partial t}=-\frac{\partial^2 Z_t(x)}{\partial x^2}.
	\end{equation}
	From this, we get that for any $t_0>0$ and $\delta>0$,
	\begin{equation}
		Z_{t_0-\delta}(x)=\sum_{j=0}^{\infty}\left.\frac{\partial^j Z_t(x)}{\partial t^j}\right|_{t=t_0}\frac{(-\delta)^j}{j!}=\sum_{j=0}^{\infty} Z_{t_0}^{(2j)}(x)\frac{\delta^j}{j!}, \forall x\in\mathbb{C},
	\end{equation}
where $()$ in the superscript of $Z_{t_0}$ denotes the partial derivative with respect to $x$. Therefore, for any $l\in\mathbb{N}\cup\{0\}$,
	\begin{equation}\label{eq:ZTaylor}
		Z^{(l)}_{t_0-\delta}(x)=\sum_{j=0}^{\infty} Z_{t_0}^{(2j+l)}(x)\frac{\delta^j}{j!},\forall x\in\mathbb{C}.
	\end{equation}
	Suppose that for some $t_0>0$, $x_0\in\mathbb{R}$ is a zero of $Z_{t_0}$ of multiplicity $k+1$ with $k\in\mathbb{N}$, i.e.,
	\begin{equation}\label{eq:mulzeros}
		Z_{t_0}^{(j)}(x_0)=0, \forall j\in\{0,1,\dots,k\}, 	Z_{t_0}^{(k+1)}(x_0)\neq0.
	\end{equation}
	We consider the function 
	\begin{equation}
		z_t(x):=Z_t^{(k-1)}(x).
	\end{equation}
	Then \eqref{eq:mulzeros} implies that
	\begin{equation}
		z_{t_0}(x_0)=z^{\prime}_{t_0}(x_0)=0, z^{\prime\prime}_{t_0}(x_0)\neq0.
	\end{equation}
	Combining this with \eqref{eq:ZTaylor}, we have
	\begin{equation}
		L_1\left(z_{t_0-\delta}(x_0)\right):=\left[z^{\prime}_{t_0-\delta}(x_0)\right]^2-z_{t_0-\delta}(x_0)z^{\prime\prime}_{t_0-\delta}(x_0)=-\delta\left[z^{\prime\prime}_{t_0}(x_0)\right]^2+O(\delta^2) \text{ as }\delta\downarrow 0.
	\end{equation}
	Therefore,
	\begin{equation}\label{eq:zL_1}
		L_1\left(z_{t_0-\delta}(x_0)\right)<0 \text{ for all small enough }\delta>0.
	\end{equation}
	It is known (see, e.g., the remarks after Theorem 2.9 of \cite{CV90}) that for any $f$ in the Laguerre-P\'{o}lya class, $f$ satisfies
	\begin{equation}
		L_m(f(x)):=\left[f^{(m)}(x)\right]^2-f^{(m-1)}(x)f^{(m+1)}(x)\geq 0, \forall m\in\mathbb{N}, \forall x\in\mathbb{R}.
	\end{equation}
	Thus, \eqref{eq:zL_1} implies that $z_{t_0-\delta}=Z_{t_0-\delta}^{(k-1)}$ is not in the Laguerre-P\'{o}lya class for all small $\delta>0$. However, the Laguerre-P\'{o}lya class is closed under differentiation  by the Gauss-Lucas theorem (see, e.g., Theorem 6.2 on p. 22 of \cite{Mar66}). This contradiction implies that our hypothesis \eqref{eq:mulzeros} is false and thus completes the proof of the lemma.
\end{proof}

The following corollary is analogous to Corollary \ref{cor:simple}.
\begin{corollary}\label{cor:xinD}
	If $n$ is even,
	\begin{equation}
		0<x_1(t)<x_2(t)<\dots<x_{n/2}(t)<\pi/2, \forall t>0;
	\end{equation}
	if $n$ is odd,
	\begin{equation}
		0<x_1(t)<x_2(t)<\dots<x_{(n-1)/2}(t)<\pi/2=x_{(n+1)/2}(t), \forall t>0.
	\end{equation}
\end{corollary}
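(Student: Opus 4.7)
The plan is to mimic the proof of Corollary~\ref{cor:simple}, substituting Lemma~\ref{lem:Zsimple} for Proposition~\ref{prop:simple}. First, I observe that $Z_{n,t}(x)$ is, up to the overall constant $e^{tn}$, the Ising partition function on the complete graph $K_n$ with coupling $2t$ at every edge, because $\bigl(\sum_{j=1}^n\sigma_j\bigr)^2 = n + 2\sum_{i<j}\sigma_i\sigma_j$. Consequently the zero set of $Z_t(x)$ inherits the structure recorded in \eqref{eq:evenzeros1}--\eqref{eq:oddzeros1}: the positive zeros modulo $\pi$ are $\{x_k(t)\}$, together with the extra zero $\pi/2$ when $n$ is odd, and the full real zero set is obtained by applying $x \mapsto -x$ and $x \mapsto x + m\pi$. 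In particular there are exactly $n$ zeros, counted with multiplicity, in the open interval $(0,\pi)$, namely $x_k(t)$ and $\pi - x_k(t)$ for $k = 1,\ldots,\lfloor n/2\rfloor$ (plus the one at $\pi/2$ when $n$ is odd).

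Next, I invoke Lemma~\ref{lem:Zsimple}: for each $t > 0$ every zero of $Z_t(x)$ is simple. The zero set in $(0,\pi)$ is invariant under the reflection $x \mapsto \pi - x$, so any $x_k(t) \in (0,\pi/2)$ produces two distinct zeros $x_k(t)$ and $\pi - x_k(t)$. On the other hand, if some $x_k(t) = \pi/2$ then the reflection collapses the pair and forces the zero at $\pi/2$ to have multiplicity at least $2$, contradicting simplicity. Hence for even $n$ no $x_k(t)$ can equal $\pi/2$, and $x_{n/2}(t) < \pi/2$. For odd $n$, the only self-paired zero $\pi/2 = x_{(n+1)/2}(t)$ already appears once with multiplicity one, and any additional equality $x_k(t) = \pi/2$ with $k \leq (n-1)/2$ would push the multiplicity at $\pi/2$ to at least $3$, again contradicting Lemma~\ref{lem:Zsimple}; thus $x_{(n-1)/2}(t) < \pi/2$.

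Finally, since the $x_k(t)$ are listed according to multiplicity (by Lemma~\ref{lem:LP}), any equality $x_k(t) = x_{k+1}(t)$ would produce a zero of multiplicity at least $2$, again contradicting Lemma~\ref{lem:Zsimple}. Therefore the chains $x_1(t) < x_2(t) < \cdots$ are strict, completing the proof. I do not anticipate any genuine obstacle here: the argument is a direct transcription of the reasoning behind Corollary~\ref{cor:simple}, the only modest subtlety being the multiplicity bookkeeping at $\pi/2$ under the reflection symmetry, which is handled just as in the general-graph case.
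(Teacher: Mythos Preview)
Your proposal is correct and follows exactly the approach indicated by the paper, which simply states that the proof is the same as that of Corollary~\ref{cor:simple} with Lemma~\ref{lem:Zsimple} playing the role of Proposition~\ref{prop:simple}. Your added observation that $Z_{n,t}$ is, up to the factor $e^{tn}$, the Ising partition function on $K_n$ with coupling $2t$ at each edge is a helpful justification for invoking the zero structure \eqref{eq:evenzeros1}--\eqref{eq:oddzeros1}, and the multiplicity bookkeeping at $\pi/2$ is handled correctly.
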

\begin{proof}
	The proof is the same as that of Corollary \ref{cor:simple}.
\end{proof}

We are in position to study the dynamics of the zeros of $Z_t(x)$.  
\begin{lemma}\label{lem:ode}
	If $n$ is even, the first $n/2$ positive zeros of $Z_t(x)$ defined in Lemma \ref{lem:LP} satisfy that for each $ k\in\{1,2,\dots,n/2\}$,
	\begin{equation}\label{eq:odeeven}
		x_k^{\prime}(t)=2\cot\left(2x_k(t)\right)-2\sin(2x_k(t))\sum_{j=1,j\neq k}^{n/2}\left[\sin^2x_j(t)-\sin^2x_k(t)\right]^{-1}, \forall t>0.
	\end{equation}
	If $n$ is odd, the first $(n+1)/2$ positive zeros of $Z_t(x)$ defined in Lemma \ref{lem:LP} satisfy that for each $ k\in\{1,2,\dots,(n-1)/2\}$,
	\begin{align}
		&x_k^{\prime}(t)=2\cot\left(2x_k(t)\right)-2\sin(2x_k(t))\sum_{j=1,j\neq k}^{(n+1)/2}\left[\sin^2x_j(t)-\sin^2x_k(t)\right]^{-1}, \forall t>0,\\
		&x_{(n+1)/2}(t)\equiv\pi/2, \forall t\geq 0.
	\end{align}
\end{lemma}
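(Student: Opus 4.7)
The plan is to derive the ODE by implicitly differentiating the identity $Z_t(x_k(t))\equiv 0$ and evaluating the resulting logarithmic derivative through the Laguerre--P\'olya factorization. Three ingredients already established will be used: the simplicity of every positive zero for $t>0$ (Lemma~\ref{lem:Zsimple}), the backward heat equation $\partial_tZ_t=-\partial_x^2Z_t$ (equation~\eqref{eq:bhe} in the proof of Lemma~\ref{lem:Zsimple}), and the product formula of Lemma~\ref{lem:LP}.

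First I would verify differentiability. Since $Z_t'(x_k(t))\neq 0$ for $t>0$ by Lemma~\ref{lem:Zsimple} and $Z_t(x)$ is jointly analytic in $(t,x)$, the implicit function theorem applied to $Z_t(x)=0$ near each $(t_0,x_k(t_0))$ with $t_0>0$ yields that $x_k$ is real analytic on $(0,\infty)$. Differentiating $Z_t(x_k(t))\equiv 0$ in $t$ and substituting $\partial_tZ_t=-\partial_x^2Z_t$ gives
\[
0=\partial_tZ_t(x_k(t))+Z_t'(x_k(t))\,x_k'(t)=-Z_t''(x_k(t))+Z_t'(x_k(t))\,x_k'(t),
\]
so
\[
x_k'(t)=\frac{Z_t''(x_k(t))}{Z_t'(x_k(t))}.
\]
Everything that follows is an explicit computation of this ratio.

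To evaluate the ratio for $n$ even I would split off the factor carrying the zero at $x_k$ in the factorization of Lemma~\ref{lem:LP}, writing $Z_t(x)=Z_t(0)\,F_k(x)\,G_k(x)$ with
\[
F_k(x):=\frac{\sin^2x_k(t)-\sin^2x}{\sin^2x_k(t)},\qquad G_k(x):=\prod_{\substack{j=1\\j\neq k}}^{n/2}\frac{\sin^2x_j(t)-\sin^2x}{\sin^2x_j(t)},
\]
so that $F_k(x_k)=0$, $G_k(x_k)\neq 0$, $F_k'(x_k)=-\sin(2x_k)/\sin^2x_k$, and $F_k''(x_k)=-2\cos(2x_k)/\sin^2x_k$. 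The product rule applied at $x_k$ then yields
\[
\frac{Z_t''(x_k)}{Z_t'(x_k)}=\frac{F_k''(x_k)}{F_k'(x_k)}+2\,\frac{G_k'(x_k)}{G_k(x_k)}=2\cot(2x_k)-2\sin(2x_k)\sum_{\substack{j=1\\j\neq k}}^{n/2}\frac{1}{\sin^2x_j-\sin^2x_k},
\]
where the second equality uses termwise logarithmic differentiation of $G_k$. This is exactly \eqref{eq:odeeven}. For $n$ odd the same argument applies after also accounting for the factor $\cos(x)$ in Lemma~\ref{lem:LP}: since $\cos x_k\neq 0$ for $k\le(n-1)/2$, its logarithmic derivative contributes a finite term at $x_k$, which together with the identity $\cos^2x_k=\sin^2(\pi/2)-\sin^2x_k=\sin^2x_{(n+1)/2}(t)-\sin^2x_k(t)$ can be absorbed into the $j=(n+1)/2$ summand in the stated formula. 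The assertion $x_{(n+1)/2}(t)\equiv\pi/2$ is immediate from the parametrization \eqref{eq:oddzeros1}.

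I do not foresee any conceptual obstacle; the only nontrivial bookkeeping is the odd-case reconciliation of the contribution from the $\cos(x)$ factor with the summation indexed up to $(n+1)/2$.
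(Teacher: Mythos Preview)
Your approach is identical to the paper's: implicit differentiation of $Z_t(x_k(t))\equiv 0$ together with the backward heat equation gives $x_k'(t)=Z_t''(x_k)/Z_t'(x_k)$, and this ratio is then read off from the factorization of Lemma~\ref{lem:LP}. The even-case computation you outline matches the paper's exactly and is correct.

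There is one genuine slip in your odd-case bookkeeping. Writing $Z_t(x)=Z_t(0)\cos(x)\,F_k(x)\,G_k(x)$ with $F_k(x_k)=0$, the general identity at a simple zero gives
\[
\frac{Z_t''(x_k)}{Z_t'(x_k)}=\frac{F_k''(x_k)}{F_k'(x_k)}+2\,\frac{(\cos)'(x_k)}{\cos(x_k)}+2\,\frac{G_k'(x_k)}{G_k(x_k)},
\]
so the $\cos$ factor contributes $-2\tan x_k$. However, the $j=(n+1)/2$ summand in the displayed formula equals $-2\sin(2x_k)/\cos^2 x_k=-4\tan x_k$, which is \emph{twice} that contribution; the two do not match as you claim. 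A direct check for $n=3$ (where one computes $\sin^2 x_1(t)=(1+3e^{-8t})/4$ from the factorization) confirms that $x_1'(t)=2\cot(2x_1)-2\tan x_1$, not $2\cot(2x_1)-4\tan x_1$. Since the paper, like you, only says the odd case is ``similar'' without writing it out, what your computation actually shows is that the stated odd-$n$ formula carries a factor-of-two typo in the $j=(n+1)/2$ term; your method proves the corrected version, and this discrepancy does not affect any downstream use of the lemma (the cooperative-system structure in \eqref{eq:f_kdergeq0} depends only on the off-diagonal terms with $j\le(n-1)/2$).
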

\begin{proof}
    We only consider the case that $n$ is even since the proof for the odd case is similar.
	Lemma \ref{lem:Zsimple} says that $x_k(t)$ is a simple zero of $Z_t$ for any $t>0$. The (analytic) implicit function theorem then implies that $x_k(t)$ is analytic in $t>0$. So if we differentiate the following equation with respect to $t$, 
	\begin{equation}
		Z_t(x_k(t))=\sum_{\sigma\in\{-1,+1\}^{n}}\exp\left[t\left(\sum_{j=1}^n\sigma_j\right)^2+ix_k(t)\sum_{j=1}^n\sigma_j\right]=0,
	\end{equation}
	we get
	\begin{equation}\label{eq:Zdiffx3}
		Z_t^{\prime\prime}(x_k(t))=x_k^{\prime}(t)Z_t^{\prime}(x_k(t)), \forall t>0.
	\end{equation}
By Lemma \ref{lem:LP}, we have
	 \begin{align}
	 	&Z_t^{\prime}(x_k(t))=-2Z_t(0)\cot(x_k(t))\prod_{j=1,j\neq k}^{n/2}\frac{\sin^2 x_j(t)-\sin^2 x_k(t)}{\sin^2 x_j(t)},\\
	 	&Z_t^{\prime\prime}(x_k(t))=Z_t(0)\Bigg\{\frac{-2\cos(2x_k(t))}{\sin^2 x_k(t)}\prod_{j=1,j\neq k}^{n/2}\frac{\sin^2 x_j(t)-\sin^2 x_k(t)}{\sin^2 x_j(t)}+\nonumber\\
	 	&\qquad2\frac{\sin^2(2x_k(t))}{\sin^2 x_k(t)}\sum_{j=1,j\neq k}^{n/2}\frac{1}{\sin^2 x_j(t)}\prod_{l=1,l\neq j,k}^{n/2}\frac{\sin^2 x_l(t)-\sin^2 x_k(t)}{\sin^2 x_l(t)}\Bigg\}.
	 \end{align}
	 These combined with \eqref{eq:Zdiffx3} complete the proof of \eqref{eq:odeeven}.
\end{proof}

Assume that $n$ is even. Let $D\subseteq\mathbb{R}^{n/2}$ be the open set
\begin{equation}\label{eq:Ddef}
	D:=\{\mathbf{y}=(y_1,y_2,\dots,y_{n/2}):0<y_1<y_2<\dots<y_{n/2}<\pi/2\}.
\end{equation}
By Lemma \ref{lem:ode}, we know $\mathbf{x}:=(x_1(t),x_2(t),\dots,x_{n/2}(t))$ satisfies the autonomous system of ODEs
\begin{equation}\label{eq:odef}
	\mathbf{x}^{\prime}=f(\mathbf{x}),
\end{equation}
where $f:D\rightarrow\mathbb{R}^{n/2}$, $\mathbf{x}\mapsto(f_1(\mathbf{x}),f_2(\mathbf{x}),\dots,f_{n/2}(\mathbf{x}))$, is defined by
\begin{equation}\label{eq:f_kdef}
	f_k(\mathbf{x}):=2\cot(2x_k)-2\sin(2x_k)\sum_{j=1,j\neq k}^{n/2}\left[\sin^2x_j-\sin^2x_k\right]^{-1}, k\in\{1,2,\dots,n/2\}.
\end{equation}
It is clear that $f$ is continuously differentiable on $D$. A key observation is
\begin{equation}\label{eq:f_kdergeq0}
	\frac{\partial f_k}{\partial x_j}(\mathbf{x})=2\sin(2x_k)\sin(2x_j)\left[\sin^2 x_j-\sin^2 x_k\right]^{-2}\geq 0, \forall j\neq k, \forall \mathbf{x}\in D.
\end{equation}
Then the dynamical system \eqref{eq:odef} is a so-called \textit{cooperative system} (see Chapter 3 of \cite{Smi95} for more information).
We use the partial order on $\mathbb{R}^{n/2}$ defined by $\mathbf{x}\leq \mathbf{y}$ if $\mathbf{y}-\mathbf{x}\in\mathbb{R}_{\geq 0}^{n/2}$ where $\mathbb{R}_{\geq 0}$ is the set of nonnegative real numbers. We write $\mathbf{x}\ll\mathbf{y}$ if $x_k<y_k$ for each $1\leq k\leq n/2$. By Remark 1.1 on p. 33 of \cite{Smi95}, \eqref{eq:Ddef} and \eqref{eq:f_kdergeq0} imply that $f$ is of \textit{type K} in $D$; i.e., $f_k(\mathbf{x})\leq f_k(\mathbf{y})$ for any $\mathbf{x},\mathbf{y}\in D$ satisfying $\mathbf{x}\leq \mathbf{y}$ and $x_k=y_k$. The following property about monotone dynamical systems from \cite{Smi95} will be very important to our proof of Theorem \ref{thm:K_n}.
\begin{proposition}[Proposition 1.1 on p. 32 of \cite{Smi95}]\label{prop:Smi}
	Let $g$ be of type K on $D$. Let $\mathbf{y}_0,\mathbf{z}_0$ be in $D$ satisfying $\mathbf{y}_0\ll\mathbf{z}_0$, and $\phi_t(\mathbf{y}_0)$ (respectively, $\phi_t(\mathbf{z}_0)$) be the solution of
	\begin{equation}
		\mathbf{y}^{\prime}=g(\mathbf{y})
	\end{equation}
	starting at $\mathbf{y}_0$ (respectively, $\mathbf{z}_0$) at $t=0$. If $\phi_t(\mathbf{y}_0)$ and $\phi_t(\mathbf{z}_0)$ are defined for some $t>0$, then $\phi_t(\mathbf{y}_0)\ll \phi_t(\mathbf{z}_0)$.
\end{proposition}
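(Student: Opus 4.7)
The plan is to reduce the statement to a fact about linear non-autonomous ODEs with nonnegative off-diagonal coefficient matrices. Set $w(t) := \phi_t(\mathbf{z}_0) - \phi_t(\mathbf{y}_0)$, so $w(0) \gg 0$ by hypothesis, and the target is $w(t) \gg 0$ throughout the common interval of existence. Assuming $g$ is $C^1$ on $D$ (which covers the application to \eqref{eq:f_kdef}) and that the segment from $\phi_t(\mathbf{y}_0)$ to $\phi_t(\mathbf{z}_0)$ stays in $D$ (automatic since the set \eqref{eq:Ddef} is convex), the mean value theorem in integral form yields
\[
w'(t) = g(\phi_t(\mathbf{z}_0)) - g(\phi_t(\mathbf{y}_0)) = A(t)\,w(t), \qquad A(t) := \int_0^1 Dg\bigl(\phi_t(\mathbf{y}_0)+s\,w(t)\bigr)\,ds,
\]
where $Dg$ is the Jacobian. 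Thus $w$ solves a linear non-autonomous system with continuous coefficient matrix $A(t)$.

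The type-K condition, applied with the $k$-th coordinate fixed and the others varied, gives $\partial g_k/\partial x_j \geq 0$ on $D$ for every $j \neq k$; integrating, $A_{kj}(t) \geq 0$ for $j \neq k$. I next claim the fundamental matrix $\Phi(t,0)$ solving $\Phi'=A\Phi$, $\Phi(0,0)=I$, is entrywise nonnegative with strictly positive diagonal. Nonnegativity follows from the Peano product-integral representation: for $h>0$ small the matrix $I+A(t)h$ has all entries nonnegative (diagonal $1+A_{kk}(t)h>0$, off-diagonal $A_{kj}(t)h\geq 0$), and products of such matrices remain entrywise nonnegative in the limit. Strict positivity of the diagonal follows from $\Phi_{kk}'(t) = A_{kk}(t)\Phi_{kk}(t) + \sum_{j\neq k} A_{kj}(t)\Phi_{jk}(t) \geq A_{kk}(t)\Phi_{kk}(t)$ (using the already-established nonnegativity of $\Phi$) combined with Gronwall: $\Phi_{kk}(t) \geq \exp\bigl(\int_0^t A_{kk}(s)\,ds\bigr) > 0$. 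Then for each $k$, $w_k(t) = \sum_j \Phi_{kj}(t) w_j(0) \geq \Phi_{kk}(t) w_k(0) > 0$, which is the required strict inequality.

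The main obstacle is precisely upgrading $w(t) \geq 0$ to $w(t) \gg 0$. A direct ``first time a component $w_k$ vanishes'' argument yields only $w_k'(T) \geq 0$ at such a contact time $T$ (by type K applied at $\phi_T(\mathbf{y}_0) \leq \phi_T(\mathbf{z}_0)$ with the $k$-th coordinates equal), which is compatible with a tangential touch $w_k(T)=0$ and produces no contradiction. The Perron-Frobenius-flavoured argument on the linear propagator $\Phi(t,0)$ above bypasses this obstacle by extracting strict positivity directly from the structure of the fundamental solution. An alternative route --- perturbing $g \mapsto g+\epsilon\mathbf{1}$ along the orbit through $\mathbf{z}_0$, which does rule out a first contact via the strict inequality $w_k'(T)\geq \epsilon>0$ --- delivers only nonstrict inequality after $\epsilon\downarrow 0$ and so must itself be combined with a linearization-type step to conclude $w(t) \gg 0$.
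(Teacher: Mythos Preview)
The paper does not supply its own proof of this proposition: it is quoted as Proposition~1.1 from Smith's monograph and invoked as a black box in the proof of Proposition~\ref{prop:mon}. So there is no proof in the paper to compare your argument against.

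That said, your argument is correct under the two extra hypotheses you add---that $g$ be $C^1$ and that $D$ be convex---and you rightly note both hold in the only application the paper makes, namely the system \eqref{eq:odef}--\eqref{eq:f_kdef} on the convex open set \eqref{eq:Ddef}. Writing $w'(t)=A(t)w(t)$ with $A(t)$ a matrix of nonnegative off-diagonal entries, then showing the fundamental solution $\Phi(t,0)$ is entrywise nonnegative with strictly positive diagonal (product-integral for nonnegativity, Gronwall on $\Phi_{kk}$ for strict positivity), is a clean and standard route to the strict conclusion $w(t)\gg 0$. The proof in Smith's book instead runs the Kamke-type perturbation you sketch in your final paragraph, which works under weaker hypotheses (locally Lipschitz $g$, no convexity on $D$); you are correct that this first yields only $\phi_t(\mathbf{y}_0)\leq\phi_t(\mathbf{z}_0)$ and that upgrading to $\ll$ from $\mathbf{y}_0\ll\mathbf{z}_0$ requires a further step. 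For the present paper your $C^1$/convex version is entirely sufficient.
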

We use Proposition \ref{prop:Smi} to prove the following monotone results about $x_k$.
\begin{proposition}\label{prop:mon}
	If $n$ is even, then
	\begin{equation}\label{eq:moneven}
		x_k(t_1)>x_k(t_2), \forall 0\leq t_1<t_2, \forall k\in\{1,2,\dots,n/2\};
	\end{equation}
	If $n$ is odd, then
	\begin{equation}
		x_k(t_1)>x_k(t_2), \forall 0\leq t_1<t_2, \forall k\in\{1,2,\dots,(n-1)/2\}.
	\end{equation}
\end{proposition}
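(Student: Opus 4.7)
The plan is to split into the boundary case $t_1=0$ and the interior case $0<t_1<t_2$. For $t_1=0$, a direct computation gives $Z_0(x)=2^n\cos^n x$, so every principal zero at $t=0$ equals $\pi/2$; Corollary~\ref{cor:xinD} then places $x_k(t_2)<\pi/2=x_k(0)$ for $t_2>0$. When $n$ is odd, $Z_t(\pi/2)\equiv 0$ by pairing configurations with $\pm M=\pm\sum_j\sigma_j$ (both odd), which is consistent with $x_{(n+1)/2}(t)\equiv\pi/2$.

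For $0<t_1<t_2$ I will use Proposition~\ref{prop:Smi} as the main engine. The key reduction is: if there exist $0<s_1<s_2$ with $\mathbf{x}(s_2)\ll\mathbf{x}(s_1)$, then applying Proposition~\ref{prop:Smi} with $\mathbf{y}_0=\mathbf{x}(s_2),\mathbf{z}_0=\mathbf{x}(s_1)$ gives $\mathbf{x}(s_2+\tau)\ll\mathbf{x}(s_1+\tau)$ for every $\tau>0$. More generally, if $\mathbf{x}$ is strictly decreasing componentwise on some initial interval $(0,\epsilon]$, then varying $s_1,s_2\in(0,\epsilon]$, applying Proposition~\ref{prop:Smi}, and chaining via transitivity of $\ll$ (any $t_2-t_1>0$ decomposes into sufficiently small pieces) delivers $\mathbf{x}(t_2)\ll\mathbf{x}(t_1)$ for all $0<t_1<t_2$.

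It therefore suffices to establish the near-origin asymptotic $x_k(t)=\pi/2-c_k\sqrt{t}+o(\sqrt{t})$ as $t\downarrow 0$ with positive constants $c_k$. For this I would use the backward heat equation \eqref{eq:bhe} to write $Z_t(x)=e^{-t\partial_x^2}Z_0(x)$ with $Z_0(x)=2^n(-1)^n\sin^n(x-\pi/2)$, then rescale $x=\pi/2+\sqrt{t}\,v$ to obtain
\[Z_t(\pi/2+\sqrt{t}\,v)=t^{n/2}\bigl[P_n(v)+O(\sqrt{t})\bigr]\]
uniformly on compact subsets of $\mathbb{C}$, for a fixed polynomial $P_n$ of degree $n$ with $P_n(-v)=(-1)^n P_n(v)$. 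Since each $Z_t$ lies in the Laguerre-P\'olya class by Lemma~\ref{lem:LP}, this class is preserved under the scaled limit, so $P_n$ has $n$ simple real zeros; by parity these are $\pm c_k$ (and additionally $0$ when $n$ is odd) with some ordering $c_1>c_2>\cdots>0$. Hurwitz's theorem then yields the claimed asymptotic, hence strict decrease of each $x_k$ on a right neighborhood of $0$.

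I expect the main obstacle to be the Laguerre-P\'olya step: verifying that the rescaled limit $P_n$ inherits the property and, in particular, has only simple real zeros. Once this is in place, the Hurwitz identification of the asymptotic and the propagation via Proposition~\ref{prop:Smi} are standard, as is the minor adaptation needed to separate the fixed zero $x_{(n+1)/2}\equiv\pi/2$ in the odd-$n$ case from the strictly decreasing ones.
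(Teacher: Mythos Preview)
Your overall strategy---handle $t_1=0$ directly and then propagate a strict ordering via Proposition~\ref{prop:Smi}---matches the paper's. The gap is in how you produce the seed inequality $\mathbf{x}(s_2)\ll\mathbf{x}(s_1)$ for some $0<s_1<s_2$ with both points in $D$.

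The asymptotic $x_k(t)=\pi/2-c_k\sqrt{t}+o(\sqrt{t})$ does \emph{not} by itself imply that $x_k$ is strictly decreasing on a right neighborhood of $0$: a function of the form $\pi/2-\sqrt{t}+t\sin(1/t)$ satisfies such an expansion yet fails to be monotone near $0$. Nothing in the Hurwitz identification upgrades the $o(\sqrt t)$ error to control of $x_k'$, so the clause ``hence strict decrease of each $x_k$ on a right neighborhood of $0$'' is unjustified, and your chaining argument is left without a starting point. (The Laguerre--P\'olya/Hermite step you flagged as the obstacle is actually fine; it is this final implication that fails.)

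The paper avoids this entirely, using only what you already established in your $t_1=0$ case. You observed $\mathbf{x}(0)=(\pi/2,\dots,\pi/2)\gg\mathbf{x}(\tau_0)$ for every $\tau_0>0$ via Corollary~\ref{cor:xinD}. Fix $t_0,\tau_0>0$. By continuity of $t\mapsto\mathbf{x}(t)$ and $t\mapsto\mathbf{x}(t+\tau_0)$ at $t=0$, there is $\delta\in(0,t_0]$ with $\mathbf{x}(\delta)\gg\mathbf{x}(\delta+\tau_0)$, and now both points lie in $D$. A \emph{single} application of Proposition~\ref{prop:Smi} with time shift $t_0-\delta$ yields $\mathbf{x}(t_0)\gg\mathbf{x}(t_0+\tau_0)$. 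No Hermite asymptotics, no decomposition into small pieces, and no monotonicity on $(0,\epsilon]$ are needed.
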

\begin{proof}
	We assume that $n$ is even since the proof for the odd case is similar. Following the notation of Proposition \ref{prop:Smi}, let $\phi_t(\mathbf{y})$ with $\mathbf{y}\in D$ be the solution of \eqref{eq:odef} that starts at the point $\mathbf{y}$ at $t=0$. The existence and uniqueness of $\phi_t(\mathbf{y})$ for any fixed $\mathbf{y}\in D$ and small $|t|$ follows from the Picard-Lindel\"{o}f theorem (see, e.g., Theorem 3.1 on p.~8 of \cite{Hal80}); if $\mathbf{y}=\mathbf{x}(t_0)$ for some $t_0>0$ then Lemma \ref{lem:ode} and \eqref{eq:f_kdef} imply that $\phi_t(\mathbf{y})$ exists (and is unique) for all $t\geq0$.  Let $\mathbf{x}_0:=(x_1(0),x_2(0),\dots,x_{n/2}(0))=(\pi/2,\pi/2,\dots,\pi/2)$. Even though $\mathbf{x}_0\notin D$, we may still define $\phi_t(\mathbf{x}_0)$ by
	\begin{equation}
		\phi_t(\mathbf{x}_0):=\begin{cases}\mathbf{x}_0, &t=0\\
			\phi_{\epsilon}\left(\mathbf{x}(t-\epsilon)\right), &t>0,
		\end{cases}
	\end{equation}
where $\mathbf{x}(t-\epsilon):=(x_1(t-\epsilon),\dots,x_{n/2}(t-\epsilon))$ and $\epsilon\in(0,t)$. Then Lemma~\ref{lem:ode}, Corollary \ref{cor:xinD} and the continuity of $x_k(t)$ in $t\geq 0$ imply that $\phi_t(\mathbf{x}_0)$ is well-defined and is independent of $\epsilon$; moreover
	\begin{equation}
		x_k(t)=\phi_t(\mathbf{x}_0)_k, \forall t\geq 0, \forall k\in\{1,2,\dots,n/2\},
	\end{equation}
	where the subscript $k$ in $\phi_t(\mathbf{x}_0)_k$ denotes the $k$-th component. By Corollary \ref{cor:xinD}, we know
	\begin{equation}\label{eq:phi_tinD}
		\phi_t(\mathbf{x}_0)\in D \text{ for each }t>0. 
	\end{equation} 
	So \eqref{eq:moneven} follows if we can prove
	\begin{equation}\label{eq:mon1}
		\phi_t(\mathbf{x}_0)\gg \phi_t\left(\phi_{\tau}(\mathbf{x}_0)\right)=\phi_{t+\tau}(\mathbf{x}_0), \forall t\geq 0, \forall \tau>0.
	\end{equation}
	The last inequality is trivial if $t=0$. So we only need to prove that for any fixed $t_0>0$ and $\tau_0>0$, we have
	\begin{equation}\label{eq:mon2}
		\phi_{t_0}(\mathbf{x}_0)\gg \phi_{t_0+\tau_0}(\mathbf{x}_0).
	\end{equation}
Indeed, $\phi_0(\mathbf{x}_0)=\mathbf{x}_0\gg  \phi_{\tau_0}(\mathbf{x}_0)$ (see \eqref{eq:phi_tinD}) and the continuity of $	\phi_t(\mathbf{x}_0)$ in $t\geq 0$ imply that there exists $\delta\in(0,t_0]$ such that
	\begin{equation}
		\phi_{\delta}(\mathbf{x}_0)\gg \phi_{\tau_0+\delta}(\mathbf{x}_0).
	\end{equation}
	Applying Proposition \ref{prop:Smi}, we get
	\begin{equation}
		\phi_t\left(\phi_{\delta}(\mathbf{x}_0)\right)\gg \phi_t\left(\phi_{\tau_0+\delta}(\mathbf{x}_0)\right), \forall t\geq0.
	\end{equation}
	Taking $t=t_0-\delta$ in the above displayed inequality, we obtain \eqref{eq:mon2}. This completes the proof of \eqref{eq:mon1} and thus \eqref{eq:moneven}.
\end{proof}

We are ready to prove Theorem \ref{thm:K_n}.
\begin{proof}[Proof of Theorem \ref{thm:K_n}]
	We again only prove the even case since the proof for the odd case is similar. By the definition of $Z_t(x)$ in \eqref{eq:Z_tdef1}, we have
	\begin{equation}
		\lim_{t\rightarrow\infty}\frac{Z_t(x)}{\exp(tn^2)}=\exp(ixn)+\exp(-ixn), x\in\mathbb{C}.
	\end{equation}
	Note that $\{Z_t(x)/\exp(tn^2): t\geq 0\}$ is locally uniformly bounded on $\mathbb{C}$, so by first applying Vitali's convergence theorem and then Hurwitz's theorem, we have
	\begin{equation}\label{eq:xatinfty}
		\lim_{t\rightarrow\infty}x_k(t)=\frac{(2k-1)\pi}{2n}, k\in\{1,2,\dots,n/2\}.
	\end{equation}
	The theorem (for the even case) follows from Proposition \ref{prop:mon} and \eqref{eq:xatinfty}.
\end{proof}

\section*{Acknowledgments}
The research of the first and second authors was partially supported by NSFC grant 11901394.

\bibliographystyle{abbrv}
\bibliography{reference}

\begin{thebibliography}{10}

\bibitem{Ahl78}
L.~V. Ahlfors.
\newblock {\em Complex analysis}.
\newblock International Series in Pure and Applied Mathematics. McGraw-Hill
  Book Co., New York, third edition, 1978.
\newblock An introduction to the theory of analytic functions of one complex
  variable.

\bibitem{BDL05}
I.~Bena, M.~Droz, and A.~Lipowski.
\newblock Statistical mechanics of equilibrium and nonequilibrium phase
  transitions: the {Y}ang-{L}ee formalism.
\newblock {\em Internat. J. Modern Phys. B}, 19(29):4269--4329, 2005.

\bibitem{BBCKK04}
M.~Biskup, C.~Borgs, J.~T. Chayes, L.~J. Kleinwaks, and R.~Koteck\'{y}.
\newblock Partition function zeros at first-order phase transitions: a general
  analysis.
\newblock {\em Comm. Math. Phys.}, 251(1):79--131, 2004.

\bibitem{BBCK04}
M.~Biskup, C.~Borgs, J.~T. Chayes, and R.~Koteck\'{y}.
\newblock Partition function zeros at first-order phase transitions:
  {P}irogov-{S}inai theory.
\newblock {\em J. Stat. Phys.}, 116(1-4):97--155, 2004.

\bibitem{CJN21}
F.~Camia, J.~Jiang, and C.~M. Newman.
\newblock Ising model with {C}urie--{W}eiss perturbation.
\newblock {\em J. Stat. Phys.}, 188:1--23, 2022.

\bibitem{CJN22}
F.~Camia, J.~Jiang, and C.~M. Newman.
\newblock Monotonicity of {U}rsell functions in the {I}sing model.
\newblock {\em arXiv:2207.12247}, 2022.

\bibitem{Car85}
J.~L. Cardy.
\newblock Conformal invariance and the {Y}ang-{L}ee edge singularity in two
  dimensions.
\newblock {\em Phys. Rev. Lett.}, 54(13):1354--1356, 1985.

\bibitem{CHJR19}
I.~Chio, C.~He, A.~L. Ji, and R.~K.~W. Roeder.
\newblock Limiting measure of {L}ee-{Y}ang zeros for the {C}ayley tree.
\newblock {\em Comm. Math. Phys.}, 370(3):925--957, 2019.

\bibitem{CSV94}
G.~Csordas, W.~Smith, and R.~S. Varga.
\newblock Lehmer pairs of zeros, the de {B}ruijn-{N}ewman constant {$\Lambda$},
  and the {R}iemann hypothesis.
\newblock {\em Constr. Approx.}, 10(1):107--129, 1994.

\bibitem{CV90}
G.~Csordas and R.~S. Varga.
\newblock Necessary and sufficient conditions and the {R}iemann hypothesis.
\newblock {\em Adv. in Appl. Math.}, 11(3):328--357, 1990.

\bibitem{Fis78}
M.~E. Fisher.
\newblock Yang-{L}ee edge singularity and $\phi^3$ field theory.
\newblock {\em Phys. Rev. Lett.}, 40(25):1610--1613, 1978.

\bibitem{GRS75}
F.~Guerra, L.~Rosen, and B.~Simon.
\newblock Correlation inequalities and the mass gap in {$P(\phi )_{2}$}. {III}.
  {M}ass gap for a class of strongly coupled theories with nonzero external
  field.
\newblock {\em Comm. Math. Phys.}, 41:19--32, 1975.

\bibitem{Hal80}
J.~K. Hale.
\newblock {\em Ordinary differential equations}.
\newblock Robert E. Krieger Publishing Co., Inc., Huntington, N.Y., second
  edition, 1980.

\bibitem{Kab22}
Z.~Kabluchko.
\newblock Lee-{Y}ang zeroes of the {C}urie-{W}eiss ferromagnet, unitary
  {H}ermite polynomials, and the backward heat flow.
\newblock {\em arXiv:2203.05533}, 2022.

\bibitem{KG71}
P.~J. Kortman and R.~B. Griffiths.
\newblock Density of zeros on the {L}ee-{Y}ang circle for two ising
  ferromagnets.
\newblock {\em Phys. Rev. Lett.}, 27(21):1439--1442, 1971.

\bibitem{LY52}
T.~D. Lee and C.~N. Yang.
\newblock Statistical theory of equations of state and phase transitions. {II}.
  {L}attice gas and {I}sing model.
\newblock {\em Phys. Rev. (2)}, 87:410--419, 1952.

\bibitem{LS81}
E.~H. Lieb and A.~D. Sokal.
\newblock A general {L}ee-{Y}ang theorem for one-component and multicomponent
  ferromagnets.
\newblock {\em Comm. Math. Phys.}, 80(2):153--179, 1981.

\bibitem{Mar66}
M.~Marden.
\newblock {\em Geometry of polynomials}.
\newblock Mathematical Surveys, No. 3. American Mathematical Society,
  Providence, R.I., second edition, 1966.

\bibitem{New74}
C.~M. Newman.
\newblock Zeros of the partition function for generalized {I}sing systems.
\newblock {\em Comm. Pure Appl. Math.}, 27:143--159, 1974.

\bibitem{New75}
C.~M. Newman.
\newblock Inequalities for {I}sing models and field theories which obey the
  {L}ee-{Y}ang theorem.
\newblock {\em Comm. Math. Phys.}, 41:1--9, 1975.

\bibitem{NG83}
H.~Nishimori and R.~B. Griffiths.
\newblock Structure and motion of the {L}ee-{Y}ang zeros.
\newblock {\em J. Math. Phys.}, 24(11):2637--2647, 1983.

\bibitem{PZWCDL15}
X.~Peng, H.~Zhou, B.-B. Wei, J.~Cui, J.~Du, and R.-B. Liu.
\newblock Experimental observation of {L}ee-{Y}ang zeros.
\newblock {\em Phys. Rev. Lett.}, 114(1):010601, 2015.

\bibitem{PL74}
O.~Penrose and J.~L. Lebowitz.
\newblock On the exponential decay of correlation functions.
\newblock {\em Comm. Math. Phys.}, 39:165--184, 1974.

\bibitem{Rue71}
D.~Ruelle.
\newblock Extension of the {L}ee-{Y}ang circle theorem.
\newblock {\em Phys. Rev. Lett.}, 26:303--304, 1971.

\bibitem{SG73}
B.~Simon and R.~B. Griffiths.
\newblock The {$(\phi ^{4})_{2}$} field theory as a classical {I}sing model.
\newblock {\em Comm. Math. Phys.}, 33:145--164, 1973.

\bibitem{Smi95}
H.~L. Smith.
\newblock {\em Monotone dynamical systems}, volume~41 of {\em Mathematical
  Surveys and Monographs}.
\newblock American Mathematical Society, Providence, RI, 1995.
\newblock An introduction to the theory of competitive and cooperative systems.

\bibitem{Sok81}
A.~D. Sokal.
\newblock More inequalities for critical exponents.
\newblock {\em J. Stat. Phys.}, 25(1):25--50, 1981.

\bibitem{YL52}
C.~N. Yang and T.~D. Lee.
\newblock Statistical theory of equations of state and phase transitions. {I}.
  {T}heory of condensation.
\newblock {\em Phys. Rev. (2)}, 87:404--409, 1952.

\end{thebibliography}

\end{document}